\newtheorem{thm}{Theorem}[section]
\newtheorem{cor}[thm]{Corollary}
\newtheorem{lem}[thm]{Lemma}
\theoremstyle{definition}
\newtheorem{defn}[thm]{Definition}
\theoremstyle{remark}
\newtheorem{rem}[thm]{Remark}
\numberwithin{equation}{section}
\begin{document}

\title{\bfseries\textrm{Quasi-synchronization of bounded confidence opinion dynamics with stochastic asynchronous rule*}
\footnotetext{*This research was supported by the National Key Research and Development Program of Ministry of Science and Technology of China under Grant No. 2018AAA0101002, the National Natural Science Foundation of China under grants No. 61803024, 11688101,61906016,  the General Project of Scientific Research Project of the Beijing Education Committee under Grant No. KM201811417002, the Foundation of Beijing Union University under Grant No. BPHR2019DZ08,
the Young Elite Scientists Sponsorship Program by CAST under Grant 2018QNRC001, and Beijing Institute of Technology Research Fund Program for Young Scholars.\\
\,\,Wei Su is with School of Automation and Electrical Engineering, University of Science and Technology Beijing \& Key Laboratory of Knowledge Automation for Industrial Processes, Ministry of Education, Beijing 100083, China, {\tt suwei@amss.ac.cn}. Xueqiao Wang is with Beijing Key Laboratory of Information Service Engineering, Beijing Union University, Beijing 100101, China, {\tt ldxueqiao@buu.edu.cn}. Ge Chen is with National Center for Mathematics and Interdisciplinary Sciences \& Key Laboratory of Systems and Control, Academy of Mathematics and Systems Science, Chinese Academy of Sciences, Beijing 100190, China, {\tt chenge@amss.ac.cn}. Kai Shen is with School of Automation, Beijing Institute of Technology, Beijing, 100081, China, {\tt kshen@bit.edu.cn}.}}
\author{Wei Su \and Xueqiao Wang \and Ge Chen \and Kai Shen}

%
\date{}%
\maketitle
\begin{abstract}
Recently the theory of noise-induced synchronization of Hegselmann-Krause (HK) dynamics has been well developed. As a typical opinion dynamics of bounded confidence, the HK model obeys a synchronous updating rule, i.e., \emph{all} agents check and update their opinions at each time point. However, whether asynchronous bounded confidence models, including the famous Deffuant-Weisbuch (DW) model, can be synchronized by noise have not been theoretically proved. In this paper, we propose a generalized bounded confidence model which possesses a stochastic asynchronous rule. The model takes the DW model and the HK model as special cases and can significantly generalize the bounded confidence models to practical application. We discover that the asynchronous model possesses a different noise-based synchronization behavior compared to the synchronous HK model. Generally, the HK dynamics can achieve quasi-synchronization \emph{almost surely} under the drive of noise. For the asynchronous dynamics, we prove that the model can achieve quasi-synchronization \emph{in mean}, which is a new type of quasi-synchronization weaker than the ``almost surely'' sense. The results unify the theory of noise-induced synchronization of bounded confidence opinion dynamics and hence proves the noise-induced synchronization of DW model theoretically for the first time. Moreover, the results provide a theoretical foundation for developing noise-based control strategy of more complex social opinion systems with stochastic asynchronous rules.
\end{abstract}

\textbf{Keywords}:Quasi-synchronization in mean, noise, asynchronous, bounded confidence, stochastic opinion dynamics
\section{Introduction}\label{intro}

In recent years, opinion dynamics are attracting increasing attention of researchers in various areas \cite{Proskurnikov2017,Proskurnikov2018,Ye2018,Chen2019}. One of the interesting topics in opinion dynamics is the noise-induced synchronization of opinion systems. During the study of opinion dynamics in a noisy environment, it has been found in a mass of simulations that the bounded confidence opinion models display a very positive tendency to be synchronized under the disturbance of noise \cite{Mas2010,Carro2013,Pineda2013,HadStauffHan2015}. Using the widely known Hegselmann-Krause (HK) model \cite{Hegselmann2002}, Su et al made a rigorous theoretical analysis of this phenomenon \cite{Su2017auto}. They proved that the noisy HK model can achieve a type of quasi-synchronization (a definition of synchronization with noise) in finite time almost surely for any initial state. After that a series of conclusions and applications of noise-based synchronization of opinion systems were obtained, including the noise-induced truth seeking \cite{Su2017free}, and the noise-induced synchronization of HK model in full space \cite{Su2019tac} and also the heterogeneous HK model \cite{Chen2019tac}, etc. These simulation and theoretical studies effectively reveal the opinion evolution in noisy circumstances. More crucially, due to the advantages of random information working as an effective social opinion control method \cite{Su2019noisecontrol}, the well establishment of noise-induced synchronization of opinion dynamics could provide a theoretical foundation for developing noise-based control strategies of social opinion system.

The analysis of noise-induced synchronization mentioned above are mainly based on the HK model, which is a typical bounded confidence opinion model. Notably, the HK model obeys a synchronous opinion updating rule, i.e., all agents continuously check and decide whether or not they will update their opinions at each time. However, a more practical case in reality is that people are more likely to communicate with each other in an asynchronous way, where only a random fraction of agents are communicating with each other at each time. Hence a more general asynchronous model is needed. An extreme example of asynchronous opinion dynamics is the famous Deffuant-Weisbuch (DW) model \cite{Deffuant2000}, where only two agents are randomly selected to communicate and therefore an asynchronous updating rule is adopted. For asynchronous models of bounded confidence, only a fraction of agents (we may call them as communicating agents) update their opinions according to the bounded confidence mechanism at each time point. Obviously, the asynchronous rule presents a more practical process of opinion evolution, especially for a large-scale social system. Currently, there are several literatures that studied the opinion formation with asynchronous rule. Touri et al \cite{Touri2014} and Etesami et al \cite{Etesami2015} both considered an asynchronous HK model, in which only one agent is communicating at each time. Rossi et al studied an opinion dynamics on the
$k$-nearest-neighbors graph, which is essentially an asynchronous model \cite{Rossi2018}. Ding et al considered an asynchronous opinion scenario with online and offline interactions \cite{Ding2017}.

At the same time, there are some theoretical studies that focus on the noise-driven properties of asynchronous opinion models. Just as the HK model, simulation studies revealed that noise-driven synchronization also arises in the DW model \cite{Carro2013}. However, using the method of analyzing noisy HK model, we can show that the DW model cannot achieve quasi-synchronization almost surely as the HK model does. Baccelli et al studied a noisy DW model on graph and proved some sufficient conditions for the system to converge. The sufficient conditions given by the authors are essentially related to the state connectivity of the system, which is difficult to be predetermined in bounded confidence models. Zhang et al carried out a analysis of consensus property for a noisy DW model in \cite{Zhang2018}, where the definition of $T$-robust consensus was introduced. The authors proved that the noisy DW model can achieve $T$-robust consensus a.s. The results imply that the noisy DW model can achieve a synchronization state at some moment $T^*\geq 0$ a.s. and keep synchronized in a time interval $[T^*,T^*+T]$. In addition, the authors showed that there is a positive probability with which robust consensus happens. With these pioneering studies, however, it is still unclear whether the DW model as well as the more general asynchronous models of bounded confidence possess a property of noise-induced synchronization, and a  complete theoretical conclusion is far to be obtained.

In this paper, we intend to make a theoretical analysis of noise-induced synchronization in depth for general asynchronous opinion dynamics. Firstly, we propose a generalized stochastic asynchronous model, for which the HK model and DW model may be considered as its special examples. In the stochastic asynchronous model, communicating agents are randomly chosen at each time, and then it is possible to update their opinions according to the bounded confidence mechanism. Interestingly, we prove that the stochastic asynchronous model cannot achieve the noise-driven synchronization almost surely as the HK model does. Meanwhile, simulation studies suggest that synchronization in some sense emerges in the presence of noise. Finally, by introducing the definition of \emph{quasi-synchronization in mean}, we prove that the stochastic asynchronous model will achieve quasi-synchronization in mean under the drive of noise. This result naturally implies the noise-induced synchronization of DW model, i.e., for any initial state, the DW model will achieve quasi-synchronization in mean under the drive of noise.  We also show that the quasi-synchronization in mean is weaker than the quasi-synchronization almost surely. The results hence reveal that the quasi-synchronization in mean we propose here is an essentially new concept which is necessary to study the noise-induced synchronization of asynchronous models. This discovery also accords with the fact that the asynchronous updating rule possesses a lower interaction frequency than that of the synchronous model. At last, a simulation result is demonstrated to understand the quasi-synchronization in mean.
Compared to the existing studies on the noise-induced synchronization of  synchronous HK model, the results in this paper provide essentially new definitions and discoveries, which completes the theory of noise-induced synchronization of bounded confidence opinion dynamics and for the first time solves the noise-induced synchronization problem of DW model. More importantly, the results in this paper lay a theoretical foundation for developing noise-based control strategy of more complex social systems with stochastic asynchronous rules.

The rest of the paper is organized as follows: Section \ref{Section:formulation} presents some preliminaries of models and definition. Section \ref{Section:results} gives the main theoretical results of the paper; Section \ref{Section:simulations} shows a simulation result to illustrate the quasi-synchronization in mean. Finally some concluding remarks are given in Section \ref{Section:conclusions}.

\section{Model description and preliminaries}\label{Section:formulation}

Denote $\mathcal{V}=\{1,2,\ldots,n\}, n\geq 2$ as the set of $n$ agents, $x_i(t)\in[0,1], i\in\mathcal{V}, t\geq 0$ as the opinion state of agent $i$ at time $t$, and let $\mathcal{U}(t)\subset\mathcal{V}$ be the set of communicating agents at time $t$, and $\{\xi_i(t), i\in\mathcal{V}, t\geq 1\}$ be the noise. Usually, noise is used to model the external and internal factors that randomly affect people's opinions, such as the random information flow on social media each day and the intrinsic opinion uncertainty resulting from one's free will. Moreover, let $\Omega$ be the sample space of $\{\xi_i(t),i\in\mathcal{S}, \mathcal{U}(t), t\geq 1\}$, $\mathcal{F}$ be the generated $\sigma-$algebra, and $\mathbb{P}$ be the probability measure on $\mathcal{F}$, so the underlying probability space is written as $(\Omega,\mathcal{F},\mathbb{P})$.
In addition, denote $\textbf{E}\{\cdot\}$ as the expectation of a random variable.

To proceed, we first propose our general bounded confidence model with stochastic asynchronous rule.
Denote
\begin{equation*}
  y_{[0,1]}=\left\{
              \begin{array}{ll}
                0, & \hbox{$y<0$} \\
                y, & \hbox{$0\leq y\leq 1$.} \\
                1, & \hbox{$y>1$}
              \end{array}
            \right.
\end{equation*}
Then, the update rule of the stochastic opinion dynamics yields:
\begin{equation}\label{basicHKmodel}
  x_i(t+1)=\left\{
             \begin{array}{ll}
               \Big(\alpha_i(t)x_i(t)+(1-\alpha_i(t))\frac{\sum_{j\in\mathcal{N}_i(t)}x_j(t)}{|\mathcal{N}_i(t)|}+\xi_i(t+1)\Big)_{[0,1]}, & \hbox{if $i\in\mathcal{U}(t), \mathcal{N}_i(t)\neq \emptyset$;}\\
               (x_i(t)+\xi_i(t+1))_{[0,1]}, & \hbox{otherwise.}
             \end{array}
           \right.
\end{equation}
where
\begin{equation}\label{basicHKmodelalpha}
  \alpha_i(t)\in[\alpha,1-\alpha]
\end{equation}
with $\alpha\in(0,\frac{1}{n}]$ is the inertial coefficient of agent $i$ at $t$,
\begin{equation}\label{neigh}
 \mathcal{N}_i(t)=\{j\in\mathcal{U}(t)-\{i\}\; \big|\; |x_j(t)-x_i(t)|\leq \epsilon\}
\end{equation}
is the neighbor set of $i$ at $t$ and $\epsilon\in(0,1]$ represents the confidence threshold of the agents. Here, $|\cdot|$ is the cardinal number of a set or the absolute value of a real number accordingly.

From (\ref{basicHKmodel}), we can see that when an agent $i\in\mathcal{U}(t)$, i.e., $i$ is a communicating agent at $t$, $i$ will communicate with other communicating agents at $t$, and then updates its opinions according to the bounded confidence mechanism. When $i$ is not a communicating agent at $t$, it will not communicate with other agents, while its opinion can only be affected by environment noise.

Now we discuss the communicating set $\mathcal{U}(t)$. In reality, people seldom update opinion values simultaneously at each time. Hence the elements of $\mathcal{U}(t)$ are randomly selected from $\mathcal{V}$. Formally, we suppose that for $t\geq0$, $\mathcal{U}(t)$ satisfies
\begin{equation}\label{Model:probnum}
\begin{split}
(a)~&\mathbb{P}\{|\mathcal{U}(t)|=k\}=p_k,~\hbox{where}~0\leq k\leq n, 0\leq p_k\leq 1 \,\hbox{and}\,\sum_kp_k=1, 0\leq p_0+p_1<1;\\
(b)~&\hbox{for any}\, i_1,i_2\in\mathcal{V}, i_1\neq i_2,\\
&\mathbb{P}\{i_1,i_2\in\mathcal{U}(t)\}=\mathbb{P}\{i_1\in\mathcal{U}(t)\}\mathbb{P}\{i_2\in\mathcal{U}(t)\},
\mathbb{P}\{i_1\in\mathcal{U}(t)\}=\mathbb{P}\{i_2\in\mathcal{U}(t)\};\\
(c)~&\{\mathcal{U}(t),\xi_i(t),i\in\mathcal{V},t\geq 0\}~\hbox{are mutually independent}.\\
\end{split}
\end{equation}

(\ref{Model:probnum}) provides the assumptions of communicating agents. The assumptions are natural and we will further provide a methodological explanation of them. (\ref{Model:probnum})(a) assumes that the number of communicating agents is randomly selected at each time, and also the number of communicating agents is not always 0 or 1, since we suppose that there do exist communication among agents with a positive probability. (\ref{Model:probnum})(b) assumes that each agent is independently selected to be the communicating agent with an equal probability at each time. With (\ref{Model:probnum})(b) we can have
\begin{equation}\label{Equa:utik}
\begin{split}
  \mathbb{P}\{\mathcal{U}(t)=\{i_1,\ldots,i_k\}\}=\frac{1}{C_n^k}p_k,
\end{split}
\end{equation}
where $\{i_1,\ldots,i_k\}\subset\mathcal{V}$ is an arbitrary choice of $k$ agents and $C_n^k$ is the combinatorial symbol that abbreviates $\frac{n!}{k!(n-k)!}$ for $0\leq k\leq n$. (\ref{Equa:utik}) indicates that given the communicating agents number is $k$, then any choice of a combination of $k$ agents possesses an equal probability.
(\ref{Model:probnum})(c) assumes that the event of which agents are communicating and the noises are mutually independent at each time, and also the updating process of communicating agents and noise are mutually independent for different time point.

The model represented by equations (\ref{basicHKmodel})-(\ref{Model:probnum}) can be considered as a generalized bounded confidence opinion dynamics with a stochastic asynchronous rule, of which the HK model and the DW model are two special examples under some conditions. For example, consider the deterministic case of model (\ref{basicHKmodel})-(\ref{Model:probnum})($\xi_i(t)\equiv0, i\in\mathcal{V}, t\geq 1$). If we let $\mathcal{U}(t)=\mathcal{V}, \alpha_i(t)=\frac{1}{|\mathcal{N}_i(t)|+1}$, the model degenerates into the standard HK model; While, if we let $\mathcal{U}(t)=\{i_1, i_2\}$ where $\{i_1,i_2\}$ is an arbitrary choice of two agents, and $\alpha_i(t)\equiv \beta\in (0,1]$, the model degenerates into the standard DW model.

In order to study the noise-induced synchronization of the model (\ref{basicHKmodel})-(\ref{Model:probnum}), we introduce the definition of \emph{quasi-synchronization almost surely} and \emph{quasi-synchronization in mean}.
\begin{defn}\label{robconsen}
Denote
$  d_{\mathcal{V}}(t)=\max\limits_{i, j\in \mathcal{V}}|x_i(t)-x_j(t)|$.
\begin{enumerate}
  \item If $\mathbb{P}\Big\{\limsup\limits_{t\rightarrow\infty}d_{\mathcal{V}}(t) \leq \epsilon\Big\}=1$, we say the system (\ref{basicHKmodel})-(\ref{Model:probnum}) achieves quasi-synchronization almost surely (a.s.);
  \item If $\limsup\limits_{t\rightarrow\infty}\textbf{E}\,d_{\mathcal{V}}(t) \leq \epsilon$, we say the system (\ref{basicHKmodel})-(\ref{Model:probnum}) achieves quasi-synchronization in mean (i.m.).\\
\end{enumerate}
\end{defn}

Different from the synchronous bounded confidence model, we can show that the asynchronous model (\ref{basicHKmodel})-(\ref{Model:probnum}) cannot achieve quasi-synchronization a.s. (Theorem \ref{Theorem:noconsas}). However, simulation studies suggest that an approximate ``synchronization'' state truly occurs for the asynchronous model. In the following section, we will reveal that this new kind of synchronization state is quasi-synchronization i.m.

\begin{rem}\label{Remark:definitionrem}
The definition of quasi-synchronization provides a normative description of an approximate synchronization in the sense of confidence threshold $\epsilon$. When a system achieves quasi-synchronization (a.s. or i.m.), all agents become neighbor to each other (a.s. or i.m.), and a cluster of social significance forms. Meanwhile, quasi-synchronization does not imply a precise error of synchronization. Actually, as it is revealed by both the previous studies about quasi-synchronization a.s. (Lemma 7 \cite{Su2017auto}, Lemma 3.2 \cite{Su2019tac}) and the present study on quasi-synchronization i.m. (Lemma \ref{Lemma:generalThm} and (\ref{Equa:bardelta})), a more accurate error of synchronization is heavily dependent on the noise amplitude $\delta$. It shows that the accurate error of synchronization can be arbitrarily small as long as $\delta$ is small enough.
\end{rem}

\section{Main Results}\label{Section:results}

In this section, we intend to introduce the main results of noise-induced synchronization of the system represented by equations (\ref{basicHKmodel})-(\ref{Model:probnum}). Firstly, we prove that the system (\ref{basicHKmodel})-(\ref{Model:probnum}) can achieve quasi-synchronization i.m. from any initial state, when there are noises with proper strength. Secondly, we testify that the system (\ref{basicHKmodel})-(\ref{Model:probnum}) cannot achieve quasi-synchronization a.s. Finally, we verify that a system achieves quasi-synchronization i.m. when it reaches quasi-synchronization a.s.

\subsection{Quasi-Synchronization i.m. of the asynchronous systems}
When there is no noise in the system (\ref{basicHKmodel})-(\ref{Model:probnum}) (i.e., $\xi_i(t)\equiv 0, i\in\mathcal{V}, t\geq 1$), the system cannot reach synchronization ($\lim_{t\rightarrow\infty}d_\mathcal{V}(t)=0$) under some initial state $x(0)\in[0,1]^n$ and $\epsilon\in(0,1)$. However, when there is noise, we can obtain the following theorems and lemmas.
\begin{thm}\label{Theorem:consmean}
Suppose that the noises $\{\xi_i(t)\}_{i\in\mathcal{V},t\geq 1}$ are zero-mean random variables with independent and identical distribution (i.i.d.), and $\textbf{E}\,\xi^2_1(1)>0, |\xi_1(1)|\leq \delta$ a.s. for $\delta>0$.
Then given any $x(0)\in [0,1]^n$ and $\epsilon\in(0,1]$, there exists $\bar{\delta}=\bar{\delta}(n,\epsilon,\alpha,p_0,p_1)>0$ such that the system (\ref{basicHKmodel})-(\ref{Model:probnum}) achieves quasi-synchronization i.m. for all $\delta\in(0,\bar{\delta}]$.
\end{thm}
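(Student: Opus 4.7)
The plan is to follow the noise-induced synchronization paradigm developed for the synchronous HK model, but to pass from almost-sure to in-mean statements in order to accommodate the asynchronous rule. The argument splits naturally into a uniform minorization step that forces the diameter below $\epsilon$ with positive probability, and a contraction step that controls the expected excursions once a cluster has formed.

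First, I would establish a uniform minorization estimate: there exist an integer $L=L(n,\epsilon,\alpha,p_0,p_1)$, a target radius $\eta\in(0,\epsilon)$, and a constant $p^{*}=p^{*}(n,\epsilon,\alpha,p_0,p_1,\delta)>0$ such that for every $x(t)\in[0,1]^{n}$,
\[
\mathbb{P}\bigl\{d_{\mathcal{V}}(t+L)\le\eta\,\bigm|\,x(t)\bigr\}\ge p^{*}.
\]
The hypothesis $p_0+p_1<1$ in (\ref{Model:probnum})(a) guarantees some $k^{*}\ge 2$ with $p_{k^{*}}>0$, and (\ref{Equa:utik}) then says that any prescribed $k^{*}$-subset of $\mathcal{V}$ is chosen with probability $p_{k^{*}}/C_{n}^{k^{*}}>0$ in each step. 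Combined with $\mathbf{E}\,\xi_{1}^{2}(1)>0$ and $|\xi_{1}(1)|\le\delta$ — which forces the noise law to place mass bounded away from zero on some sub-interval of $[-\delta,\delta]$ — I would construct an explicit driving sequence: at each of $L$ consecutive steps prescribe a specific $k^{*}$-subset together with a noise range that pulls the extremal opinions toward a common target. After at most $L$ such steps, all $n$ agents are forced into a single interval of length $\eta<\epsilon$.

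Second, I would convert this into quasi-synchronization in mean. Once $d_{\mathcal{V}}(t)\le\eta<\epsilon$, every pair of agents is a mutual neighbor, so by (\ref{basicHKmodel}) the update of a communicating agent is a convex combination of current opinions perturbed by a noise of magnitude at most $\delta$. This yields a one-step contraction in expectation of the form
\[
\mathbf{E}\bigl\{d_{\mathcal{V}}(t+1)\mid x(t)\bigr\}\le (1-\rho)\,d_{\mathcal{V}}(t)+C\delta
\]
on $\{d_{\mathcal{V}}(t)\le\eta\}$, with $\rho\in(0,1)$ and $C>0$ depending only on $n$, $\alpha$, and the probabilities $p_{k}$. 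Iterating this bound and combining it with the hitting-time estimate from Step 1 (so that excursions above $\eta$ are geometrically dominated and the diameter is trivially bounded by $1$) gives
\[
\limsup_{t\to\infty}\mathbf{E}\,d_{\mathcal{V}}(t)\le K(n,\alpha,\epsilon,p_0,\ldots,p_n)\,\delta,
\]
and the conclusion follows by choosing $\bar\delta$ so small that $K\bar\delta\le\epsilon$; this is presumably where the general lemma referred to as Lemma \ref{Lemma:generalThm} is invoked.

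The main obstacle is Step 1: producing, \emph{uniformly} over all initial configurations in $[0,1]^{n}$, a driving event of probability bounded away from zero whose effect is to compress the opinion diameter below $\epsilon$. The asynchrony prevents moving all $n$ agents simultaneously — at most $k^{*}$ of them may update in any step — so the clustering must be carried out sequentially, while the idle agents continue to absorb independent noise and can drift out of the target interval. Balancing the merging schedule against this idle drift, and choosing $\eta$ in terms of $\delta$ and $L$ so that the eventual cluster still sits comfortably inside an $\epsilon$-ball, is the technical heart of the argument and the reason one obtains only in-mean (rather than almost-sure) quasi-synchronization.
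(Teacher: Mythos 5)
Your overall architecture---a uniform positive-probability compression event that forces the diameter below a target level within a bounded horizon, followed by control of the expected diameter after the hitting time using the trivial bound $d_{\mathcal{V}}\le 1$ on excursions---is exactly the skeleton of the paper's argument (Lemmas \ref{Lemma:smallproboccur}, \ref{Lemma:asenter} and \ref{Lemma:generalThm}; the theorem itself is then the case $\mu=1$ of Lemma \ref{Lemma:generalThm}). The genuine gap is the pivotal inequality of your second step, the one-step expected contraction $\textbf{E}\{d_{\mathcal{V}}(t+1)\mid x(t)\}\le(1-\rho)d_{\mathcal{V}}(t)+C\delta$ with $\rho>0$ uniform on $\{d_{\mathcal{V}}(t)\le\eta\}$. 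This is false for the model at hand. The diameter is a maximum over all pairs, and in one step only the agents in $\mathcal{U}(t)$ are averaged. Take the DW case $p_2=1$ and a configuration with two agents essentially at each extreme of an interval of length $\eta$: whichever pair communicates, at least one agent near each extreme is idle at time $t$, so $d_{\mathcal{V}}(t+1)\ge d_{\mathcal{V}}(t)-2\delta-\gamma$ where $\gamma$ is the (arbitrarily small) gap between the tied extremal agents. For $\eta$ of order $\epsilon$ and $\delta$ small this contradicts $(1-\rho)\eta+C\delta$. Even when both running extremes $m_t,M_t$ communicate, only their mutual distance contracts, not the diameter.

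The repair is precisely the content of Lemma \ref{Lemma:noiserobcons}: one must work over blocks of $L_0=\frac{n(n-1)}{2}$ consecutive steps and condition on the event $\bigcap_r A(r)$ that the running extremal agents are communicating at every step of the block, an event of probability at least $\big(\tfrac{2(1-p_0-p_1)}{n(n-1)}\big)^{L_0}>0$; only then does the full diameter provably drop by the definite amount $\tfrac{\alpha\lambda\epsilon}{2(n-1)}$, while on the complementary event it grows by at most $2L_0\delta$. Moreover, the paper never establishes an expected-contraction recursion at all: in Lemma \ref{Lemma:generalThm} it bounds $\mathbb{P}\{d_{\mathcal{V}}(T+k)>\tfrac{\mu\epsilon}{2}\}$ directly by $(1-\tilde p^{L_0})^{L}$ (failure of all $L$ blocks in some window) and inserts this into $\textbf{E}\,d_{\mathcal{V}}(T+k)\le\tfrac{\mu\epsilon}{2}+\mathbb{P}\{d_{\mathcal{V}}(T+k)>\tfrac{\mu\epsilon}{2}\}$. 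A block-wise version of your contraction inequality would also close the argument, but the single-step version as written cannot be proved. Your Step 1 corresponds to Lemma \ref{Lemma:asenter}; note that the paper's noise protocol there prescribes noise signs for \emph{all} agents according to their position relative to the midpoint, so it compresses the diameter for every realization of $\mathcal{U}(t)$ and there is no need to prescribe the communicating sets, which sidesteps the idle-drift difficulty you flag as the technical heart.
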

By Theorem \ref{Theorem:consmean}, the quasi-synchronization i.m. of DW model can be obtained directly.
Let $\mathcal{U}(t)=\{i_1,i_2\}$ in the system (\ref{basicHKmodel})-(\ref{Model:probnum}), where $\{i_1,i_2\}\subset\mathcal{V}$ is an arbitrary choice of two agents and $\beta\in(0,1]$.  The standard DW model has a form as
\begin{equation}\label{Model:deterDW}
\begin{split}
  x_{i_r}(t+1)=&\left\{
              \begin{array}{ll}
                &\beta x_{i_r}(t)+(1-\beta)x_{i_{3-r}}(t),\,\hbox{if}\,\,|x_{i_1}(t)-x_{i_2}(t)|\leq \epsilon;\\
                &x_{i_r}(t), \,\hbox{otherwise.}
              \end{array}
            \right.\\
  x_k(t+1)=&x_k(t),\quad k\notin \mathcal{U}(t).
  \end{split}
\end{equation}
where $r=1,2$.

Let us consider the following noisy DW model
\begin{equation}\label{Model:noisyDW}
\begin{split}
  x_i(t+1)=&(\tilde{x}_i(t)+\xi_i(t+1))_{[0,1]},
  \end{split}
\end{equation}
where $\tilde{x}_i(t)$ is the right side of (\ref{Model:deterDW}).
\begin{cor}\label{Corollary:DWmean}
(Quasi-synchronization i.m. of DW model) Given any $x(0)\in[0,1]^n, \epsilon\in(0,1]$ for the system (\ref{Model:noisyDW}), there exists $\bar{\delta}=\bar{\delta}(n,\epsilon,\beta)>0$, such that $\limsup\limits_{t\rightarrow\infty}\textbf{E}\,d_{\mathcal{V}}(t) \leq \epsilon$ for all $\delta\in(0,\bar{\delta}]$.
\end{cor}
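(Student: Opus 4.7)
The plan is to recognize (\ref{Model:noisyDW}) as a specialization of the general stochastic asynchronous dynamics (\ref{basicHKmodel})-(\ref{Model:probnum}) and then invoke Theorem \ref{Theorem:consmean} directly. First I would set up the dictionary: take $\mathcal{U}(t)$ to be the uniformly chosen pair underlying (\ref{Model:deterDW}), so $p_2 = 1$ and $p_k = 0$ for $k \neq 2$; and take the inertial coefficient $\alpha_i(t) \equiv \beta$ for every $i$ and every $t$. Under this dictionary, assumption (a) of (\ref{Model:probnum}) holds since $p_0 + p_1 = 0 < 1$; the symmetry part of (b), $\mathbb{P}\{i_1\in\mathcal{U}(t)\}=\mathbb{P}\{i_2\in\mathcal{U}(t)\}$, is immediate from uniform pair selection; and (c) is built into the DW construction.

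Next I would check that the update rule (\ref{basicHKmodel}) actually reproduces (\ref{Model:noisyDW}) under this specialization. When $\mathcal{U}(t) = \{i_1, i_2\}$ and $|x_{i_1}(t) - x_{i_2}(t)| \leq \epsilon$, the neighbor sets from (\ref{neigh}) are $\mathcal{N}_{i_r}(t) = \{i_{3-r}\}$, so the ``if'' branch of (\ref{basicHKmodel}) with $\alpha_i(t) = \beta$ yields exactly $\beta\, x_{i_r}(t) + (1-\beta)\, x_{i_{3-r}}(t)$, the DW pre-noise update. When $|x_{i_1}(t) - x_{i_2}(t)| > \epsilon$, both neighbor sets are empty and the ``otherwise'' branch gives $x_{i_r}(t)$, again matching (\ref{Model:deterDW}). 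For $k \notin \mathcal{U}(t)$ the two recursions agree trivially, and the additive noise together with the $[0,1]$-clipping coincide in both formulations.

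The remaining bookkeeping is the inertial-coefficient bound $\alpha_i(t) \in [\alpha, 1-\alpha]$ from (\ref{basicHKmodelalpha}): for $\beta \in (0,1)$ I would choose $\alpha := \min\{\beta,\, 1-\beta,\, 1/n\}$, which lies in $(0,1/n]$ and sandwiches $\beta$; the endpoint $\beta = 1$ is degenerate (the deterministic DW update is the identity) and can be treated separately by a direct noise argument. With every hypothesis of Theorem \ref{Theorem:consmean} in force, the theorem supplies $\bar{\delta} = \bar{\delta}(n,\epsilon,\alpha,p_0,p_1)$, which in this setting is a function only of $(n,\epsilon,\beta)$ since $p_0 = p_1 = 0$ and $\alpha$ depends only on $n$ and $\beta$; the conclusion $\limsup_{t\to\infty}\mathbf{E}\,d_{\mathcal{V}}(t)\leq\epsilon$ for all $\delta\in(0,\bar\delta]$ follows. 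The only real obstacle is the transcription itself: certifying that the DW pair-selection fits hypothesis (b) in the form the theorem's proof actually uses. Once that dictionary is settled, the corollary is an immediate consequence of Theorem \ref{Theorem:consmean} with no further analytic work.
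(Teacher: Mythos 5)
Your proposal is correct and matches the paper's own treatment: the paper likewise obtains the corollary by specializing the general model (\ref{basicHKmodel})--(\ref{Model:probnum}) to $p_2=1$, $p_k=0$ for $k\neq 2$, $\alpha_i(t)\equiv\beta$, and then invoking Theorem \ref{Theorem:consmean} directly, with no further argument given. Your extra bookkeeping --- the choice $\alpha=\min\{\beta,1-\beta,1/n\}$, the check that uniform pair selection supplies the exchangeability property (\ref{Equa:utik}) that the proofs actually use (rather than the literal independence in (\ref{Model:probnum})(b), which fails for $n>2$), and the flag on the degenerate endpoint $\beta=1$ --- is more careful than the paper itself.
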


To prove Theorem \ref{Theorem:consmean}, some lemmas are needed. The following lemma provides a basic tool for analyzing the noise-induced properties of bounded confidence opinion dynamics, which roughly states that when a random walk has a uniform positive probability entering a region within a finite time,  it will almost surely enter that region in finite time.
\begin{lem}\label{Lemma:smallproboccur}
Let $\{w_t, t\geq 1\}$ be a random walk on $R^n$, $\{T_i:\Omega\rightarrow \mathbb{N}^+, i\geq 1\}$ be a sequence of increasing random variables. For $D\subset R^n$, denote $T=\inf\limits_{t\geq 1}\{t:w_t\in D\}$ and $\bar{D}=R^n-D$. If for any $T_i, i\geq 1$, there is a constant $0<p\leq 1$ such that $ \mathbb{P}\Big\{w_{T_{i+1}}\in D\Big|\bigcap_{k\leq i}\{w_{T_k}\in\bar{D}\}\Big\}\geq p$, then $\mathbb{P}\{T<\infty\}=1$.
\end{lem}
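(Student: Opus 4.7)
The plan is to show directly that the event $\{T=\infty\}$ has probability zero, by controlling the probability that $w_{T_i}$ misses $D$ for every $i$ via an iterated conditional‐probability (Borel--Cantelli type) argument.

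First I would set $A_i=\{w_{T_i}\in\bar{D}\}$ for $i\geq 1$ and observe the trivial containment
\[
\{T=\infty\}\subseteq\bigcap_{i\geq 1}A_i,
\]
since if $w_{T_j}\in D$ for some $j$ then $T\leq T_j<\infty$ (using that the $T_i$ take values in $\mathbb{N}^+$). Thus it suffices to prove $\mathbb{P}\bigl(\bigcap_{i\geq 1}A_i\bigr)=0$.

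Next I would invoke the hypothesis, which rewrites as
\[
\mathbb{P}\Bigl(A_{i+1}\,\Big|\,\bigcap_{k\leq i}A_k\Bigr)=1-\mathbb{P}\Bigl(w_{T_{i+1}}\in D\,\Big|\,\bigcap_{k\leq i}A_k\Bigr)\leq 1-p,
\]
whenever the conditioning event has positive probability (otherwise the bound below is trivially $0$). Applying the multiplication rule and iterating gives
\[
\mathbb{P}\Bigl(\bigcap_{k\leq N}A_k\Bigr)=\mathbb{P}(A_1)\prod_{i=1}^{N-1}\mathbb{P}\Bigl(A_{i+1}\,\Big|\,\bigcap_{k\leq i}A_k\Bigr)\leq (1-p)^{N-1}
\]
for every $N\geq 1$. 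Since the sets $\bigcap_{k\leq N}A_k$ are decreasing in $N$, continuity of $\mathbb{P}$ from above yields $\mathbb{P}\bigl(\bigcap_{i\geq 1}A_i\bigr)\leq\lim_{N\to\infty}(1-p)^{N-1}=0$, and combining with the containment above gives $\mathbb{P}\{T<\infty\}=1$.

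There is essentially no hard step in this argument: the key content is already packaged into the uniform lower bound $p$ on the conditional hitting probabilities, and the rest is a standard geometric-decay calculation. The only point requiring a line of care is that the conditional probability is undefined on null conditioning events, but this is harmless because on such events the intersection already has probability zero and the iterated bound holds automatically. The lemma will later be applied by taking the $T_i$ to be suitably chosen random times at which an auxiliary favourable configuration of the noise and communication set $\mathcal{U}(t)$ occurs; verifying the uniform positive probability $p$ for those applications is where the real work of the paper lies, but that is outside the scope of this lemma itself.
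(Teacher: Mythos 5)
Your proof is correct and is essentially the standard argument: the paper itself does not spell out a proof of this lemma but defers to the last part of Proposition 3.1 in \cite{Su2019tac}, where the same iterated conditional-probability bound $\mathbb{P}\bigl(\bigcap_{k\leq N}A_k\bigr)\leq(1-p)^{N-1}$ followed by continuity from above is used. Your additional care about null conditioning events and the containment $\{T=\infty\}\subseteq\bigcap_{i\geq 1}A_i$ is sound and matches the intended reasoning.
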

\begin{proof}
The proof of Lemma \ref{Lemma:smallproboccur} was given in the last part of Proposition 3.1 in \cite{Su2019tac}.
\end{proof}

\begin{lem}\label{Lemma:asenter}
Given any initial state $x(0)\in [0,1]^n, \epsilon\in(0,1]$ of the system (\ref{basicHKmodel})-(\ref{Model:probnum}), and any $\lambda\in(0,1]$, define $T=\inf\limits_{t\geq 0}\{t:d_{\mathcal{V}}(t)\leq \lambda\epsilon\}$, then $\mathbb{P}\{T<\infty\}=1$ for all $0<\delta\leq \frac{\lambda\epsilon}{2}$.
\end{lem}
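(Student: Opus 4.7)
The plan is to invoke Lemma \ref{Lemma:smallproboccur} with target set $D = \{x\in[0,1]^n : d_{\mathcal V}(x)\leq \lambda\epsilon\}$ and deterministic times $T_i = i\tau$ for an integer $\tau\geq 1$ chosen below. It suffices to exhibit a uniform lower bound $\mathbb{P}\{x(\tau)\in D \mid x(0)=x_0\}\geq p>0$ valid for every $x_0\in[0,1]^n$; combining this with the time-homogeneity of the dynamics and the joint independence (\ref{Model:probnum})(c) then supplies the conditional-probability hypothesis of Lemma \ref{Lemma:smallproboccur}.

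To produce this bound I would exploit only the noise randomness, treating $\mathcal{U}(t)$ as adversarial. The hypothesis $\textbf{E}\,\xi_1(1)=0$, $\textbf{E}\,\xi_1^2(1)>0$, $|\xi_1(1)|\leq\delta$ forces $\xi_1(1)$ to take strictly negative values on a set of positive probability, so by countable additivity there exist $a\in(0,\delta]$ and $q>0$ with $\mathbb{P}\{\xi_1(1)\leq -a\}\geq q$. Set $\tau = \lceil 1/a\rceil$ and consider the favorable noise event
\[
\mathcal{E} = \bigcap_{s=0}^{\tau-1}\bigcap_{i\in\mathcal{V}}\bigl\{\xi_i(s+1)\leq -a\bigr\},
\]
which satisfies $\mathbb{P}(\mathcal{E})\geq q^{n\tau}$ by the i.i.d.\ assumption, independently of the communication process by (\ref{Model:probnum})(c).

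The deterministic core of the argument is the one-step estimate $M(t+1)\leq\max(M(t)-a,0)$ valid on $\mathcal{E}$ for \emph{every} realization of $\mathcal{U}(t)$, where $M(t):=\max_{i\in\mathcal{V}}x_i(t)$. Indeed, inspecting the two branches of (\ref{basicHKmodel}) shows that $x_i(t+1)$ is the $[0,1]$-truncation of some convex combination of $\{x_j(t)\}_{j\in\mathcal{V}}$ plus $\xi_i(t+1)$; any such convex combination is bounded above by $M(t)$, on $\mathcal{E}$ the noise subtracts at least $a$, and the truncation only further reduces the upper bound. Iterating with $M(0)\leq 1$ and $\tau a\geq 1$ gives $M(\tau)=0$, hence $x(\tau)\equiv 0$ and $d_{\mathcal V}(\tau)=0\leq \lambda\epsilon$, so $x(\tau)\in D$ on $\mathcal{E}$.

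Combining the two ingredients yields $\mathbb{P}\{x(\tau)\in D\mid x(0)=x_0\}\geq q^{n\tau}$ uniformly in $x_0$, and Lemma \ref{Lemma:smallproboccur} concludes that $\mathbb{P}\{T<\infty\}=1$. I do not foresee a genuine obstacle: the convex-combination bound bypasses any case analysis of the bounded-confidence neighbor graph, so the proof reduces to the elementary observation that a sustained coordinated downward push by the noise drives every opinion to $0$ in at most $\lceil 1/a\rceil$ steps. The only point requiring care is to verify the independence of $\mathcal{E}$ from the past state trajectory via (\ref{Model:probnum})(c) so that the conditional-probability hypothesis of Lemma \ref{Lemma:smallproboccur} is genuinely met.
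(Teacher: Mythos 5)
Your proof is correct, but it takes a genuinely different route from the paper's. The paper's proof (given in the Appendix) designs a \emph{contracting} noise protocol: every agent whose pre-noise value lies in the lower half of the current opinion range receives noise in $[a,\delta]$ and every agent in the upper half receives noise in $[-\delta,-a]$, so that the diameter $d_{\mathcal V}(t)$ shrinks step by step; the existence of a horizon $L$ after which $d_{\mathcal V}\leq\lambda\epsilon$ holds with uniformly positive conditional probability is then imported from the argument of Theorem 5 of \cite{Su2017auto}, and Lemma \ref{Lemma:smallproboccur} finishes. You instead push \emph{all} agents downward by at least $a$ per step and let the truncation $(\cdot)_{[0,1]}$ absorb everyone at $0$ within $\tau=\lceil 1/a\rceil$ steps, which gives $d_{\mathcal V}(\tau)=0$ exactly. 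Your one-step bound $M(t+1)\leq\max(M(t)-a,0)$ is indeed valid on your event $\mathcal E$ for any realization of $\mathcal U(t)$, because both branches of (\ref{basicHKmodel}) are truncations of a convex combination of current opinions plus noise (using $\alpha_i(t)\in[0,1]$), and the independence structure (\ref{Model:probnum})(c) makes the noise event over a future window independent of the past trajectory, so the conditional-probability hypothesis of Lemma \ref{Lemma:smallproboccur} is met with the deterministic times $T_i=i\tau$. What your route buys: it is fully self-contained (no appeal to an external theorem), it avoids any analysis of how the bounded-confidence averaging interacts with a contracting protocol, and it in fact proves the conclusion for every $\delta>0$, not only $\delta\leq\lambda\epsilon/2$. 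What it costs: it leans essentially on the compactness of the state space and the absorbing truncation at the boundary of $[0,1]$, so unlike the paper's contracting protocol it would not survive in a full-space version of the model (cf.\ \cite{Su2019tac}), where there is no boundary to crush the opinions against.
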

The proof of Lemma \ref{Lemma:asenter} designs a noise protocol which drives the system to reach the goal, as in \cite{Su2017auto}. We give the proof in Appendix.

In the following, we intend to analyze the system properties of the asynchronous model (\ref{basicHKmodel})-(\ref{Model:probnum}). Specially, the analysis methodology of quasi-synchronization i.m. of the proposed asynchronous model is quite different compared to the previous studies of quasi-synchronization a.s. of synchronous models.
For $t\geq 0$, we take
\begin{equation*}
\begin{split}
 m_t\in&\Big\{i\in\mathcal{V}:x_i(t)=\min_{j\in\mathcal{V}} x_j(t)\Big\}, \\
M_t\in&\Big\{i\in\mathcal{V}:x_i(t)=\max_{j\in\mathcal{V}} x_j(t)\Big\},
\end{split}
\end{equation*}
and define the event
\begin{equation}\label{Equa:eventAt}
  A(t)=\{m_t\in\mathcal{U}(t), M_t\in\mathcal{U}(t)\}.
\end{equation}
$A(t)$ is the set of two agents with maximum and minimum opinion values who are also communicating agents at $t$. Then we have the following lemma.

\begin{lem}\label{Lemma:noiserobcons}
Consider the system (\ref{basicHKmodel})-(\ref{Model:probnum}) and denote $L_0=\frac{n(n-1)}{2}$. Given $\lambda\in(0,1]$, if there exists $T<\infty$ a.s. such that $d_\mathcal{V}(T)\leq \frac{\lambda\epsilon}{2}$, then
\begin{equation}\label{Equa:Lemmadvleqn}
  \mathbb{P}\Big\{d_\mathcal{V}(T+t)\leq d_\mathcal{V}(T)+2t\delta\leq \lambda\epsilon\Big\}=1
\end{equation}
for all $1\leq t\leq L_0, \delta\in\Big(0,\frac{\alpha\lambda\epsilon}{2n(n-1)^2}\Big]$, and
\begin{equation}\label{Equa:Lemmadvtgeqn}
  \mathbb{P}\bigg\{d_\mathcal{V}\Big(T+L_0\Big)\leq \frac{\lambda\epsilon}{2}-\frac{\alpha\lambda\epsilon}{2(n-1)}\bigg|\bigcap\limits_{r=T}^{T+L_0-1}A(r)\bigg\}=1.
\end{equation}
\end{lem}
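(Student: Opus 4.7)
The plan is to establish (\ref{Equa:Lemmadvleqn}) by a direct noise-accumulation induction, and (\ref{Equa:Lemmadvtgeqn}) by combining a per-step contraction estimate on the extremal agents with a level-set descent argument that exploits the conditioning $A(r)$ over the $L_0=n(n-1)/2$ step window.

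For (\ref{Equa:Lemmadvleqn}), I would use the key observation that in both branches of (\ref{basicHKmodel}), $x_i(r+1)$ is either a convex combination of values in $[\underline{x}(r),\bar{x}(r)]$ or the quantity $x_i(r)$ itself, to which $\xi_i(r+1)\in[-\delta,\delta]$ is added before clipping to $[0,1]$. Hence a.s.\ $\underline{x}(r)-\delta\le x_i(r+1)\le \bar{x}(r)+\delta$ for every $i$, so $d_\mathcal{V}(r+1)\le d_\mathcal{V}(r)+2\delta$. A straightforward induction starting from $d_\mathcal{V}(T)\le \lambda\epsilon/2$ yields $d_\mathcal{V}(T+t)\le d_\mathcal{V}(T)+2t\delta$. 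The choice $\delta\le \alpha\lambda\epsilon/(2n(n-1)^2)$ and $t\le L_0$ gives $2t\delta\le \alpha\lambda\epsilon/(2(n-1))$, and using $\alpha/(n-1)\le 1$ the right-hand side is bounded by $\lambda\epsilon$.

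For (\ref{Equa:Lemmadvtgeqn}), I would condition on $\bigcap_{r=T}^{T+L_0-1}A(r)$ and first invoke (\ref{Equa:Lemmadvleqn}) to ensure $d_\mathcal{V}(r)\le\lambda\epsilon\le\epsilon$ for all $r\in[T,T+L_0]$, so that $\mathcal{N}_i(r)=\mathcal{U}(r)\setminus\{i\}$ for every $i\in\mathcal{U}(r)$. Because $m_r,M_r\in\mathcal{U}(r)$, the neighbor average for any $i\in\mathcal{U}(r)$ contains both $\underline{x}(r)$ and $\bar{x}(r)$, and combined with $\alpha_i(r)\in[\alpha,1-\alpha]$ and $|\mathcal{N}_i(r)|\le n-1$ this produces the per-step estimates
\[
x_i(r+1)\le \bar{x}(r)-\frac{\alpha\, d_\mathcal{V}(r)}{n-1}+\delta, \qquad x_i(r+1)\ge \underline{x}(r)+\frac{\alpha\, d_\mathcal{V}(r)}{n-1}-\delta,
\]
while for $i\notin\mathcal{U}(r)$ one only has $|x_i(r+1)-x_i(r)|\le \delta$. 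The contraction over $L_0$ steps then follows from a level-set descent argument on the noise-free dynamics: if $\bar{x}(r+1)=\bar{x}(r)$ there must exist an agent $i\notin\mathcal{U}(r)$ with $x_i(r)=\bar{x}(r)$, and this agent becomes $M_{r+1}$, which $A(r+1)$ forces into $\mathcal{U}(r+1)$ so that its value strictly drops at step $r+1$; the analogous statement holds for $\underline{x}(r)$ and $m_r$. Since at most $n-1$ other agents can sit at each extremal level, the $L_0=n(n-1)/2$ steps (processing both the maximum and minimum level sets) suffice to produce a noise-free contraction $d_\mathcal{V}^{\mathrm{nf}}(T+L_0)\le \lambda\epsilon/2 - \alpha\lambda\epsilon/(n-1)$. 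Adding the accumulated noise bound $2L_0\delta\le \alpha\lambda\epsilon/(2(n-1))$ then yields the claimed inequality $d_\mathcal{V}(T+L_0)\le \lambda\epsilon/2 - \alpha\lambda\epsilon/(2(n-1))$.

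The main obstacle is the precise bookkeeping of the level-set descent in Part 2: the per-step bound on $x_i(r+1)$ for $i\in\mathcal{U}(r)$ is local, but $\bar{x}(r+1)$ and $\underline{x}(r+1)$ also depend on agents outside $\mathcal{U}(r)$, so one must carefully track how the ``plateau'' of agents at the current extremal level is emptied across successive steps. The number $L_0=n(n-1)/2$ is exactly calibrated so that the noise-free contraction margin exceeds the worst-case noise accumulation $2L_0\delta$ allowed by the hypothesis on $\delta$, leaving the net contraction $\alpha\lambda\epsilon/(2(n-1))$.
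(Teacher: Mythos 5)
Your Part 1 and the per-step estimates in Part 2 coincide with the paper's argument: the $2\delta$-per-step growth bound gives (\ref{Equa:Lemmadvleqn}), and the one-sided bounds $x_i(r+1)\le \bar{x}(r)-\frac{\alpha d_\mathcal{V}(r)}{n-1}+\delta$, $x_i(r+1)\ge \underline{x}(r)+\frac{\alpha d_\mathcal{V}(r)}{n-1}-\delta$ for $i\in\mathcal{U}(r)$ (valid because $A(r)$ puts $m_r,M_r$ in $\mathcal{U}(r)$ and $d_\mathcal{V}(r)\le\lambda\epsilon\le\epsilon$ makes all communicating agents mutual neighbors) are exactly the paper's (\ref{Equa:m01move})--(\ref{Equa:U0ijdis}). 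Where you diverge is the bookkeeping that turns these local estimates into the global bound (\ref{Equa:Lemmadvtgeqn}) after $L_0$ steps: the paper tracks \emph{pairs} of agents that have been simultaneously communicating at some $t_0$, shows their pairwise distance is contracted at $t_0+1$ and then grows by at most $2\delta$ per step, and closes with the calibration $2L_0\delta\le\frac{\alpha\lambda\epsilon}{2(n-1)}$; you instead run a descent argument on the running maximum and minimum, using $A(r+1)$ to force the new extremal agent into $\mathcal{U}(r+1)$.

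The level-set descent as you state it has a genuine gap. Your counting ("at most $n-1$ other agents can sit at each extremal level, so $L_0$ steps suffice") presumes that the maximum only moves when a discrete plateau at the exact value $\bar{x}(r)$ is emptied. But the new maximizer at time $r+1$ need not have sat at $\bar{x}(r)$: it can be any non-communicating agent located anywhere in the band $\big(\bar{x}(r)-\frac{\alpha d_\mathcal{V}(r)}{n-1}+\delta,\ \bar{x}(r)\big]$, so $\bar{x}$ can decrease by arbitrarily small increments while the identity of $M_r$ keeps changing, and counting occupants of "levels" does not bound the number of steps needed to realize the full contraction $\frac{\alpha\lambda\epsilon}{n-1}$. (The noise makes exact levels meaningless as well.) To close this you must track, for each agent, the last time it communicated and how far below the running extremes it was pulled at that moment, and argue that it can only creep back by $\delta$ per non-communicating step --- which is in substance the pairwise bookkeeping the paper performs via (\ref{Equa:U0ijdist}) and (\ref{Equa:Utijdist}). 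You flag this bookkeeping as "the main obstacle" yourself, and indeed it is the one step your sketch does not actually supply; the rest of the proposal is a faithful reconstruction of the paper's proof.
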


\begin{proof}
For convenience of notation, suppose $T= 0$ a.s. To prove (\ref{Equa:Lemmadvleqn}), we only need to prove
\begin{equation}\label{Equa:dvtleqn}
 d_\mathcal{V}(t)\leq d_\mathcal{V}(0)+2t\delta\leq \lambda\epsilon, \quad a.s.
\end{equation}
for $1\leq t\leq L_0$.

Since $|\xi_i(t)|\leq \delta$ a.s., from (\ref{basicHKmodel}), $d_\mathcal{V}(t)\leq d_\mathcal{V}(t-1)+2\delta\leq\ldots\leq d_\mathcal{V}(0)+2t\delta$, implying the first part of (\ref{Equa:dvtleqn}). The second part of (\ref{Equa:dvtleqn}) can be directly obtained by $1\leq t\leq \frac{n(n-1)}{2}$, $\alpha\leq \frac{1}{n}$ and $\delta\leq \frac{\alpha\lambda\epsilon}{2n(n-1)^2}$.

Now we proceed to prove (\ref{Equa:Lemmadvtgeqn}).
By  (\ref{Model:probnum})(a) and (\ref{Equa:utik}), for $t\geq 0$
\begin{equation}\label{Equa:probAt}
  \begin{split}
  \mathbb{P}\{A(t)\}=&\sum_{k=2}^n\frac{1}{C_n^k}p_kC_{n-2}^{k-2}=\sum_{k=2}^n\frac{k(k-1)}{n(n-1)}p_k\\
  \geq &\frac{2}{n(n-1)}(p_2+\ldots+p_n)\\
=& \frac{2}{n(n-1)}(1-p_0-p_1)>0.
  \end{split}
\end{equation}
By (\ref{Model:probnum})(c), $\{A(t), t\geq 0\}$ are independent, then we can gain
\begin{equation}\label{Equa:probcapAt}
\begin{split}
  \mathbb{P}\bigg\{\bigcap\limits_{r=0}^{L_0-1}A(r)\bigg\}=\prod_{r=0}^{L_0-1}\mathbb{P}\{A(r)\}
\geq\bigg(\frac{2(1-p_0-p_1)}{n(n-1)}\bigg)^{L_0}>0.
\end{split}
\end{equation}
To prove (\ref{Equa:Lemmadvtgeqn}), we assume $\mathbb{P}\bigg\{\bigcap\limits_{r=0}^{L_0-1}A(r)\bigg\}=1$ without loss of generality, and we then need to prove
\begin{equation}\label{Equa:dvtgeqn}
 d_\mathcal{V}(L_0)\leq \frac{\lambda\epsilon}{2}-\frac{\alpha\lambda\epsilon}{2(n-1)},\quad a.s.
\end{equation}
By the above assumption, we know $\mathbb{P}\{A(0)\}=1$, which implies that $m_0$ and $M_0$ are communicating agents at $t=0$. And, $d_\mathcal{V}(0)\leq \frac{\lambda\epsilon}{2}\leq \epsilon$ implies that $m_0$ and $M_0$ are neighbors to each other. From (\ref{basicHKmodel}), we know
\begin{equation}\label{Equa:m01}
\begin{split}
x_{m_0}(1)=\alpha_{m_0}(0)x_{m_0}(0)+(1-\alpha_{m_0}(0))\frac{\sum_{j\in \mathcal{N}_{m_0}(0)}x_j(0)}{|\mathcal{N}_{m_0}(0)|}+\xi_{m_0}(1)
\end{split}
\end{equation}
then, by $\alpha\leq \alpha_{m_0}(0)\leq 1-\alpha$ in (\ref{basicHKmodelalpha}) and $|\xi_{m_0}(1)|\leq \delta$ a.s., it follows a.s.
\begin{equation}\label{Equa:m01move}
\begin{split}
  x_{m_0}(1)-x_{m_0}(0)=&(1-\alpha_{m_0}(0)\frac{\sum_{j\in \mathcal{N}_{m_0}(0)}(x_j(0)-x_{m_0}(0))}{|\mathcal{N}_{m_0}(0)|}+\xi_{m_0}(1)\\
  \geq &(1-\alpha_{m_0}(0))\frac{x_{M_0}(0)-x_{m_0}(0)}{n-1}+\xi_{m_0}(1)\\
  \geq&\frac{\alpha}{n-1} d_\mathcal{V}(0)-\delta.
  \end{split}
\end{equation}
Similarly, we can get
\begin{equation}\label{Equa:M01}
\begin{split}
x_{M_0}(1)= \alpha_{M_0}(0)x_{M_0}(0)+(1-\alpha_{M_0}(0))\frac{\sum_{j\in \mathcal{N}_{M_0}(0)}x_j(0)}{|\mathcal{N}_{M_0}(0)|}+\xi_{M_0}(1)
\end{split}
\end{equation}
and a.s.
\begin{equation}\label{Equa:M01move}
\begin{split}
  x_{M_0}(1)-x_{M_0}(0)=&(1-\alpha_{M_0}(0)\frac{\sum_{j\in \mathcal{N}_{M_0}(0)}(x_j(0)-x_{M_0}(0))}{|\mathcal{N}_{M_0}(0)|}+\xi_{M_0}(1)\\
  \leq &(1-\alpha_{M_0}(0))\frac{x_{m_0}(0)-x_{M_0}(0)}{n-1}+\xi_{M_0}(1)\\
  \leq&-\frac{\alpha}{n-1} d_\mathcal{V}(0)+\delta.
  \end{split}
\end{equation}
Equations (\ref{Equa:m01move}) and (\ref{Equa:M01move}) yield a.s.
\begin{equation}\label{Equa:M01m01dis}
\begin{split}
  |x_{M_0}(1)-x_{m_0}(1)|\leq &d_\mathcal{V}(0)-\frac{2\alpha}{n-1} d_\mathcal{V}(0)+2\delta\\
\leq &\frac{\lambda\epsilon}{2}-\Big(\frac{\alpha\lambda\epsilon}{n-1}-2\delta\Big)
\end{split}
\end{equation}
For any $i\in U(0)$, we know $x_{m_0}(0)\leq x_i(0)\leq x_{M_0}(0)$. Hence, following a similar argument as illustrated above, (\ref{Equa:M01m01dis}) implies a.s.
\begin{equation}\label{Equa:U0ijdis}
\begin{split}
  |x_i(1)-x_j(1)|\leq \frac{\lambda\epsilon}{2}-\Big(\frac{\alpha\lambda\epsilon}{n-1}-2\delta\Big)
\end{split}
\end{equation}
for any $i,j\in U(0)$.

Equation (\ref{Equa:U0ijdis}) yields that once two agents are communicating at $t=0$, their distance at $t=1$ has an upper bound which is represented by the right side of (\ref{Equa:U0ijdis}). During the following movement, their distance can exceed the upper bound only when they are not communicating agents simultaneously. In this case, their distance may increase by no more than $2\delta$ after each time step.
Since $\delta\leq \frac{\alpha\lambda\epsilon}{2n(n-1)^2}$, then a.s.
\begin{equation}\label{Equa:U0ijdist}
\begin{split}
  |x_i(t)-x_j(t)|\leq \frac{\lambda\epsilon}{2}-\Big(\frac{\alpha\lambda\epsilon}{n-1}-2t\delta\Big)\leq \frac{\lambda\epsilon}{2}-\frac{\alpha\lambda\epsilon}{2(n-1)}
\end{split}
\end{equation}
for all $i,j\in U(0)$ and $1\leq t\leq \frac{n(n-1)}{2}$.

Given any $2\leq t_0\leq L_0-1$, by (\ref{Equa:dvtleqn}), we can get $d_\mathcal{V}(t_0)\leq d_\mathcal{V}(0)+2t_0\delta$. Similar to the process of obtaining (\ref{Equa:M01m01dis}), we have a.s.
\begin{equation}\label{Equa:Mt1mt1dis}
\begin{split}
  |x_{M_{t_0}}(t_0+1)-x_{m_{t_0}}(t_0+1)|\leq &d_\mathcal{V}(t_0)-\frac{2\alpha}{n-1} d_\mathcal{V}(t_0)+2\delta\\
\leq &\frac{\lambda\epsilon}{2}-\Big(\frac{\alpha\lambda\epsilon}{n-1}-2t_0\delta\Big)+2\delta
\end{split}
\end{equation}
Then just as the process of obtaining the equation (\ref{Equa:U0ijdist}),
\begin{equation}\label{Equa:Utijdist}
\begin{split}
  |x_i(t)-x_j(t)|\leq &\frac{\lambda\epsilon}{2}-\Big(\frac{\alpha\lambda\epsilon}{n-1}-2t_0\delta\Big)+\Big(L_0-t_0\Big)(2\delta)\\
\leq& \frac{\lambda\epsilon}{2}-\frac{\alpha\lambda\epsilon}{2(n-1)}, \quad a.s.
\end{split}
\end{equation}
for all $i,j\in U(t_0)$ and $t_0+1\leq t\leq L_0$.

For any $i\in\mathcal{V}$, since $\mathbb{P}\bigg\{\bigcap\limits_{r=0}^{L_0-1}A(r)\bigg\}=1$ by assumption, it is certain that $i\in A(t)$ for some $0\leq t\leq L_0$, or $x_{m_t}(t)\leq x_i(t)\leq x_{M_t}(t)$ for all $0\leq t\leq L_0$. For both cases, by (\ref{Equa:U0ijdist}) and (\ref{Equa:Utijdist}), we have
\begin{equation}\label{Equa:Vtijdist}
\begin{split}
  \Big|x_i(L_0)-x_j(L_0)\Big|\leq \frac{\lambda\epsilon}{2}-\frac{\alpha\lambda\epsilon}{2(n-1)},\,a.s.
\end{split}
\end{equation}
for all $i,j\in\mathcal{V}$. Hence (\ref{Equa:dvtgeqn}) can be obtained.
This complete the proof.
\end{proof}
Lemma \ref{Lemma:noiserobcons} indicates that, once the system enters a region which is narrow enough, it will not get away from the region, provided some special agents with extreme opinion values are always communicating.

Furthermore, in order to achieve the final result of quasi-synchronization i.m. of the asynchronous model, we would like to introduce the following lemma. In the proof of the lemma, we introduce a stopping time as a bridge to obtain the moment property of $d_\mathcal{V}(t)$.

\begin{lem}\label{Lemma:generalThm}
Suppose the noises $\{\xi_i(t)\}_{i\in\mathcal{V},t\geq 1}$ are given in Theorem \ref{Theorem:consmean}. Let $x(0)\in [0,1]^n, \epsilon\in(0,1]$ be arbitrarily given, then for any $\mu\in(0,1]$, there is $\bar{\delta}=\bar{\delta}(\mu,n,\epsilon,\alpha,p_0, p_1)>0$, such that $\lim\limits_{t\rightarrow\infty}\textbf{E}\,d_\mathcal{V}(t)\leq \mu\epsilon$ for all $\delta\in(0,\bar{\delta}]$.
\end{lem}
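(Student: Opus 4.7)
The plan is to combine Lemma~\ref{Lemma:asenter} (almost-sure entry into a small region) with Lemma~\ref{Lemma:noiserobcons} (diameter control once inside) through an iterated hitting-time/renewal argument. Given $\mu\in(0,1]$, I would set $\lambda=\mu$ and take $\bar{\delta}$ small enough to satisfy both $\bar{\delta}\le \lambda\epsilon/4$ (so that Lemma~\ref{Lemma:asenter} applies with target $\lambda\epsilon/2$) and $\bar{\delta}\le \alpha\lambda\epsilon/(2n(n-1)^2)$ (the hypothesis of Lemma~\ref{Lemma:noiserobcons} at parameter $\lambda$). By Lemma~\ref{Lemma:asenter} the stopping time $\tau_0:=\inf\{t\ge 0:d_{\mathcal{V}}(t)\le \lambda\epsilon/2\}$ is almost surely finite; in fact, inspecting its proof (which designs a deterministic noise protocol succeeding with positive probability $p_\ast$ within a bounded number $N_\ast$ of steps uniformly in the starting state) delivers $\textbf{E}\,\tau_0<\infty$ together with a geometric tail estimate.

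Next I would iterate: define $\tau_{k+1}:=\inf\{t\ge \tau_k+L_0:d_{\mathcal{V}}(t)\le \lambda\epsilon/2\}$ for $k\ge 0$. The strong Markov property together with the uniform $(N_\ast,p_\ast)$ estimate gives $\textbf{E}[\tau_{k+1}-\tau_k]\le M$ for a constant $M=M(n,\epsilon,\alpha,p_0,p_1,\mu)$. By (\ref{Equa:Lemmadvleqn}), on every ``controlled window'' $[\tau_k,\tau_k+L_0]$ we have $d_{\mathcal{V}}(t)\le \lambda\epsilon$ almost surely, whereas outside such windows only the trivial bound $d_{\mathcal{V}}(t)\le 1$ is available. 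Setting $\rho(t):=\mathbb{P}(t\in\bigcup_{k\ge 0}[\tau_k,\tau_k+L_0])$, this yields the elementary estimate $\textbf{E}\,d_{\mathcal{V}}(t)\le \lambda\epsilon\cdot\rho(t)+(1-\rho(t))$.

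The remaining step is to force $\rho(t)\to 1$. I would do this by chaining (\ref{Equa:Lemmadvtgeqn}) of Lemma~\ref{Lemma:noiserobcons} across successive windows conditional on the ``good'' events $\bigcap_r A(r)$, whose probabilities are uniformly bounded below by (\ref{Equa:probcapAt}) and which are independent across disjoint windows thanks to (\ref{Model:probnum})(c). A Borel--Cantelli style argument combined with the finite expected excursion bound $M$ then ensures that for all large $t$ the process lies in a controlled window with probability approaching one, so $\limsup_{t\to\infty}\textbf{E}\,d_{\mathcal{V}}(t)\le \lambda\epsilon=\mu\epsilon$.

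The main obstacle is the quantitative step $\rho(t)\to 1$: the naive renewal bound only yields $\rho(t)\gtrsim L_0/M$, not $1$. Closing this gap likely requires either choosing $\lambda$ strictly smaller than $\mu$, leaving a buffer $(\mu-\lambda)\epsilon$ to absorb residual excursion contributions, or establishing a sharper exponential tail bound on excursion lengths powered by the uniform $(N_\ast,p_\ast)$ input. Either way, careful bookkeeping of how $\bar{\delta}$ compounds through all the above constraints (entry threshold, diameter control, buffer, and excursion control) is what produces the dependency $\bar{\delta}=\bar{\delta}(\mu,n,\epsilon,\alpha,p_0,p_1)$ stated in the lemma.
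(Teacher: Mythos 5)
Your proposal assembles the right ingredients (Lemma~\ref{Lemma:asenter} for entry, Lemma~\ref{Lemma:noiserobcons} for control, independence of the $A(t)$ events), but the renewal/excursion architecture you build on them has a genuine gap, which you yourself flag: the bound $\textbf{E}\,d_{\mathcal{V}}(t)\le \lambda\epsilon\,\rho(t)+(1-\rho(t))$ with $\lambda=\mu$ only gives the conclusion if $\rho(t)\to 1$, and neither the naive renewal estimate nor a vague ``Borel--Cantelli style argument'' delivers that. Your suggested repair (take $\lambda<\mu$ and absorb the residue into a buffer) is the right instinct, but as stated it still requires a uniform-in-$t$ bound on the probability of being \emph{outside} a controlled window, and you do not supply the mechanism that produces one; tracking excursion lengths and re-entry times is exactly the bookkeeping the paper's proof is designed to avoid.

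The paper closes the gap with three quantitative choices you are missing. First, the entry target is not $\mu\epsilon/2$ but the much smaller $\frac{\mu\epsilon}{4(1+L_0L)}$, where $L$ is chosen so that $(1-\tilde p^{L_0})^L\le \frac{\mu\epsilon}{2}$ with $\tilde p=\frac{2(1-p_0-p_1)}{n(n-1)}$; combined with $\bar\delta\le\frac{\mu\epsilon}{8(1+L_0L)}$, the deterministic growth bound $d_{\mathcal{V}}(T+k)\le d_{\mathcal{V}}(T)+2k\delta$ then keeps the diameter below $\mu\epsilon/4$ for an entire window of length $L_0L$ after the single hitting time $T$, with no further hitting times needed. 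Second, for $k>L_0L$ the event $\{d_{\mathcal{V}}(T+k)>\frac{\mu\epsilon}{4}\}$ forces \emph{all} $L$ of the independent block events $B(s,1),\dots,B(s,L)$ preceding time $T+k$ to fail (a single success re-contracts the diameter via (\ref{Equa:Lemmadvtgeqn}) and the $2\delta$-per-step bound then carries it to time $T+k$), so $\mathbb{P}\{d_{\mathcal{V}}(T+k)>\frac{\mu\epsilon}{4}\}\le(1-\tilde p^{L_0})^L\le\frac{\mu\epsilon}{2}$ uniformly in $k$ --- no statement that this probability tends to $1$ or $0$ is needed. Third, the expectation is split at the threshold $\frac{\mu\epsilon}{2}$ as $\textbf{E}\,d_{\mathcal{V}}(T+k)\le\frac{\mu\epsilon}{2}+\mathbb{P}\{d_{\mathcal{V}}(T+k)>\frac{\mu\epsilon}{2}\}\le\mu\epsilon$, and the passage from the shifted times $T+k$ to $\limsup_{t\to\infty}\textbf{E}\,d_{\mathcal{V}}(t)$ uses only $\mathbb{P}\{T<\infty\}=1$ via the decomposition $\textbf{E}\,d_{\mathcal{V}}(t)=\sum_{k=0}^{t}\textbf{E}\,d_{\mathcal{V}}(T+k)I_{\{T=t-k\}}+\textbf{E}\,d_{\mathcal{V}}(t)I_{\{T>t\}}$; in particular your claim that one should extract $\textbf{E}\,\tau_0<\infty$ with geometric tails from the proof of Lemma~\ref{Lemma:asenter} is unnecessary (and is not established there).
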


\begin{proof}
Denote $\tilde{p}=\frac{2}{n(n-1)}(1-p_0-p_1)$, $L_0=\frac{n(n-1)}{2}$, $L=\min\Big\{l>0:(1-\tilde{p}^{L_0})^l\leq \frac{\mu\epsilon}{2}\Big\}$ and $T=\inf\limits_{t\geq 0}\Big\{t:d_{\mathcal{V}}(t)\leq \frac{\mu\epsilon}{4(1+L_0L)}\Big\}$, then
\begin{equation}\label{Equa:dvt}
  d_\mathcal{V}(T)\leq \frac{\mu\epsilon}{4(1+L_0L)}, a.s.
\end{equation}
Denote
\begin{equation}\label{Equa:bardelta}
\bar{\delta}=\min\Big\{\frac{\alpha\mu\epsilon}{2n(n-1)^2}, \frac{\mu\epsilon}{8(1+L_0L)}\Big\},
\end{equation}
and we next prove that
\begin{equation}\label{Equa:dvtkleqmu}
  \textbf{E}\,d_\mathcal{V}(T+k)\leq \mu\epsilon
\end{equation}
for all $k\geq 1$ and $\delta\in(0,\bar{\delta}]$.

In order to prove (\ref{Equa:dvtkleqmu}), we first consider $\mathbb{P}\Big\{d_\mathcal{V}(T+k)>\frac{\mu\epsilon}{2}\Big\}$ for any given $k>0$. Take $\lambda=\frac{\mu}{2}$ in Lemma \ref{Lemma:noiserobcons}, and notice that $d_\mathcal{V}(T)\leq \frac{\mu\epsilon}{4(1+L_0L)}\leq \frac{\mu\epsilon}{4}$ a.s., then by (\ref{Equa:Lemmadvleqn}) in Lemma \ref{Lemma:noiserobcons} and (\ref{Equa:dvt}), we can get
\begin{equation}\label{Equa:dvkleqL}
  \begin{split}
    d_\mathcal{V}(T+k)\leq  d_\mathcal{V}(T)+2k\delta  \leq  \frac{\mu\epsilon}{4(1+L_0L)}+2L_0L\bar{\delta}\leq \frac{\mu\epsilon}{4}, a.s.
  \end{split}
\end{equation}
for $1\leq k\leq L_0L$, implying
\begin{equation}\label{Equa:problessL}
\mathbb{P}\Big\{d_\mathcal{V}(T+k)>\frac{\mu\epsilon}{2}\Big\}=0, \qquad 1\leq k\leq L_0L.
\end{equation}
Next we consider $\mathbb{P}\Big\{d_\mathcal{V}(T+k)>\frac{\mu\epsilon}{2}\Big\}$ for $k>L_0L$.
Since
 $d_\mathcal{V}(T)\leq \frac{\mu\epsilon}{4(1+L_0L)}\leq \frac{\mu\epsilon}{4}$ a.s.,
by Lemma \ref{Lemma:noiserobcons}, we know that once $A(t)$ occurs  for $T\leq t\leq T+L_0-1$, then $d_\mathcal{V}(T+L_0)\leq \frac{\mu\epsilon}{4}-\frac{\alpha\mu\epsilon}{4(n-1)}$ a.s. By $\delta\leq \frac{\alpha\mu\epsilon}{8(n-1)L_0L}$, we can get
\begin{equation}\label{Equa:dvtafterAt}
  d_\mathcal{V}(T+L_0+t)\leq d_\mathcal{V}(T+L_0)+2\delta t\leq \frac{\mu\epsilon}{4}, a.s.
\end{equation}
for $1\leq t\leq L_0L$.

Denote $B(s,r)=\bigcap_{t=T+s+(r-1)L_0}^{T+s+rL_0-1}A(t), s\geq 0, r\geq 1$. (\ref{Equa:dvtafterAt}) implies that when there is a moment $T$ such that $d_\mathcal{V}(T)\leq \frac{\mu\epsilon}{4}$, and $A(t)$ occurs in the following $L_0$ times, then $d_\mathcal{V}(T+t)\leq \frac{\mu\epsilon}{4}$ for all $L_0\leq t\leq L_0+L_0L$. In other words, once $B(0,1)$ occurs, $d_\mathcal{V}(t)$ can not exceed $\frac{\mu\epsilon}{4}$ during the next $L_0L$ steps. By (\ref{Equa:dvkleqL}), $d_\mathcal{V}(T+t)\leq \frac{\mu\epsilon}{4}$ a.s. for all $1\leq t\leq L_0L$. Hence, if there is $k>L_0L$ such that $d_\mathcal{V}(T+k)>\frac{\mu\epsilon}{4}$ a.s., there must exist a period of length $L_0L$ and some integer $s\geq 0$ such that anyone of $\{B(s,1),\ldots,B(s,L)\}$ cannot happen, i.e.,
\begin{equation}\label{Equa:dvtsubsetbsL}
  \Big\{d_\mathcal{V}(T+k)>\frac{\mu\epsilon}{4}\Big\}\subset\bigg\{\bigcap_{r=1}^{L}\{\Omega -B(s,r)\}\bigg\}.
\end{equation}
By (\ref{Model:probnum}) and (\ref{Equa:probAt}), $\{A(t), t\geq 0\}$ are i.i.d., and so are $\{B(s,r), s\geq 0, r\geq 1\}$ by strong Markov property. As a result, for any given $k>L_0L$, we can get by (\ref{Equa:dvtsubsetbsL})
\begin{equation}\label{Equa:dvt1sutsetbkL}
\begin{split}
  \mathbb{P}\Big\{d_\mathcal{V}(T+k)>\frac{\mu\epsilon}{4}\Big\}\leq &\mathbb{P}\bigg\{\bigcap_{r=1}^{L}\{\Omega -B(s,r)\}\bigg\}\\
 =&(1-\mathbb{P}\{B(0,1)\})^L
\end{split}
\end{equation}
where $\Omega$ is the sample space.

By (\ref{Equa:probcapAt}) and (\ref{Equa:dvt1sutsetbkL}), we have
\begin{equation}\label{Equa:dvt1geqmu}
\begin{split}
  \mathbb{P}\Big\{d_\mathcal{V}(T+k)>\frac{\mu\epsilon}{4}\Big\}\leq&(1-\mathbb{P}\{B(0,1)\})^L\\
=&\bigg(1-\prod_{r=0}^{L_0-1}\mathbb{P}\{ A(T+r)\}\bigg)^L\\
\leq &\Big(1-\tilde{p}^{L_0}\Big)^L
\end{split}
\end{equation}
for any given $k>L_0L$.

Since $d_\mathcal{V}(t)\leq 1$ a.s. for all $t\geq 0$, by (\ref{Equa:problessL}), (\ref{Equa:dvt1geqmu}) and the definition of $L$, it follows
\begin{equation}\label{Equa:expecT}
\begin{split}
  \textbf{E}\,d_\mathcal{V}(T+k)=&\textbf{E}\Big(d_\mathcal{V}(T+k)I_{\{d_\mathcal{V}(T+k)\leq\frac{\mu\epsilon}{2}\}}+d_\mathcal{V}(T+k)I_{\{d_\mathcal{V}(T+k)>\frac{\mu\epsilon}{2}\}}\Big)\\
  \leq &\frac{\mu\epsilon}{2}+\mathbb{P}\Big\{d_\mathcal{V}(T+k)>\frac{\mu\epsilon}{2}\Big\}\\
  \leq &\frac{\mu\epsilon}{2}+(1-\tilde{p}^{L_0}\Big)^L\leq \mu\epsilon.
  \end{split}
\end{equation}
for all $k\geq 0$.

Subsequently, given any $t\geq 0$, we gain by (\ref{Equa:expecT})
\begin{equation}
\begin{split}
\textbf{E}\,d_\mathcal{V}(t)=&\textbf{E}\Big(d_\mathcal{V}(t)I_{\{T\leq t\}}+d_\mathcal{V}(t)I_{\{T>t\}}\Big)\\
=&\textbf{E}\bigg(\sum_{k=0}^{t}d_\mathcal{V}(T+k)I_{\{T=t-k\}}\bigg)+\textbf{E}\,d_\mathcal{V}(t)I_{\{T>t\}}\\
=&\sum_{k=0}^{t}\textbf{E}\,d_\mathcal{V}(T+k)I_{\{T=t-k\}}+\textbf{E}\,d_\mathcal{V}(t)I_{\{T>t\}}\\
\leq& \mu\epsilon \sum_{k=0}^{t}\textbf{E}\,I_{\{T=t-k\}}+\textbf{E}\,d_\mathcal{V}(t)I_{\{T>t\}}\\
\leq&\mu\epsilon \mathbb{P}\{T\leq t\}+\mathbb{P}\{T>t\}
\end{split}
\end{equation}
By Lemma \ref{Lemma:asenter}, $\mathbb{P}\{T<\infty\}=1$, we can thus obtain
\begin{equation}\label{limleqepsi}
\begin{split}
  \limsup\limits_{t\rightarrow\infty}\textbf{E}\,d_{\mathcal{V}}(t)\leq& \limsup\limits_{t\rightarrow\infty}\Big(\mu\epsilon \mathbb{P}\{T\leq t\}+\mathbb{P}\{T>t\}\Big)\\
  \leq &\mu\epsilon.
\end{split}
\end{equation}
This completes the proof.
\end{proof}
\begin{rem}\label{Remark:thm1rem1}
Lemma \ref{Lemma:generalThm} provides a general conclusion of noise-induced quasi-synchronization i.m. of the asynchronous system and also an estimation of the synchronization error. It shows that the synchronization error is heavily dependent on noise amplitude $\delta$. Deduced from (\ref{Equa:bardelta}), one can see that the synchronization error can be arbitrarily small (taking any $0<\mu\leq 1$) when $\delta$ is small enough.
\end{rem}
\begin{rem}\label{Remark:thm1rem2}
Lemma \ref{Lemma:generalThm} only shows that the system can achieve quasi-synchronization i.m. when noise amplitude is suitably small, while how large noise affects the synchronization is absent. Intuitively and as we can show readily, large noise could destroy the quasi-synchronization i.m. of the system. In fact, given any $0<p\leq 1$, we consider the systems with $0<\epsilon\leq 2p^2$. When $\mathbb{P}\{\xi_1(1)>\frac{\epsilon}{2p^2}\}\geq p, \mathbb{P}\{\xi_1(1)<-\frac{\epsilon}{2p^2}\}\geq p$, then for all $t\geq 1$ we can have $\mathbb{P}\{d_{\mathcal{V}}(t+1)>\frac{\epsilon}{p^2}\}\geq \mathbb{P}\{\xi_{m(t)}(t+1)<-\frac{\epsilon}{2p^2},\xi_{M(t)}(t+1)>\frac{\epsilon}{2p^2}\}\geq p^2$, implying $\textbf{E}\,d_{\mathcal{V}}(t+1)> \frac{\epsilon}{p^2}p^2=\epsilon$ for all $t\geq 1$.
\end{rem}

\noindent\emph{Proof of Theorem \ref{Theorem:consmean}:} Take $\mu=1$ in Lemma \ref{Lemma:generalThm}, and we obtain the conclusion.
\hfill $\Box$

\subsection{Quasi-synchronization a.s. of asynchronous systems}
Lemmas \ref{Lemma:asenter} and \ref{Lemma:noiserobcons} show that when the system (\ref{basicHKmodel})-(\ref{Model:probnum}) is synchronous (i.e., $|\mathcal{U}(t)|\equiv n, t\geq 1$ or $p_n$=1), the system can achieve quasi-synchronization a.s. for all $\delta\in(0,\frac{\alpha\epsilon}{n(n-1)^2})$ (this is actually the case of HK model \cite{Su2017auto}). But when the model is asynchronous, i.e., $p_n<1$, we have the following conclusion.
\begin{thm}\label{Theorem:noconsas}
Suppose that the noises $\{\xi_i(t)\}_{i\in\mathcal{V},t\geq 1}$ are zero-mean i.i.d. random variables, and $\textbf{E}\,\xi^2_1(1)>0,|\xi_i(t)|\leq \delta$ a.s. for $\delta>0$. If $p_n<1$, then the system (\ref{basicHKmodel})-(\ref{Model:probnum}) cannot achieve quasi-synchronization a.s for any $x(0)\in [0,1]^n, \epsilon\in(0,1)$  and $\delta>0$.
\end{thm}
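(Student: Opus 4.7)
The plan is to apply the second Borel--Cantelli lemma to a sequence of independent events on disjoint time blocks, each of which deterministically forces $d_{\mathcal{V}}$ to attain its maximum value $1$ at the end of the block. Since $\epsilon<1$, this will give $d_{\mathcal{V}}(t)=1$ for infinitely many $t$ almost surely, so $\mathbb{P}\{\limsup_t d_{\mathcal{V}}(t)\leq\epsilon\}=0$.

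\smallskip

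\noindent\emph{Setup of the events.} Because $\textbf{E}\,\xi_1(1)=0$ and $\textbf{E}\,\xi_1^2(1)>0$, the noise is not identically zero, so one can pick $c\in(0,\delta]$ with $p_+:=\mathbb{P}\{\xi_1(1)>c\}>0$ and $p_-:=\mathbb{P}\{\xi_1(1)<-c\}>0$. Set $T=\lceil 1/c\rceil$, partition $\{0,1,2,\ldots\}$ into disjoint blocks $B_n=\{(n-1)T,\ldots,nT-1\}$, and let $E_n$ be the event that, throughout $B_n$: (i)~$1\notin\mathcal{U}(t)$; (ii)~$\xi_1(t+1)>c$; and (iii)~$\xi_k(t+1)<-c$ for every $k\in\{2,\ldots,n\}$. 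The hypothesis $p_n<1$ together with (\ref{Equa:utik}) gives $q:=\mathbb{P}\{1\in\mathcal{U}(t)\}=\sum_k(k/n)p_k<1$, hence $\mathbb{P}\{1\notin\mathcal{U}(t)\}=1-q>0$. By (\ref{Model:probnum})(c), each $E_n$ has the common positive probability $p^\ast=(1-q)^T p_+^T p_-^{(n-1)T}$, and the family $\{E_n\}_{n\geq 1}$ is mutually independent because different $E_n$'s depend on $(\mathcal{U}(t),\xi_\cdot(t+1))$ indexed by disjoint sets of $t$.

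\smallskip

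\noindent\emph{Key deterministic claim.} On $E_n$, irrespective of the state at the start of the block, one has $x_1(nT)=1$ and $x_k(nT)=0$ for every $k\geq 2$, hence $d_{\mathcal{V}}(nT)=1>\epsilon$. Indeed, (i) reduces agent $1$'s update to $x_1(t+1)=(x_1(t)+\xi_1(t+1))_{[0,1]}$, and (ii) yields $x_1(t+1)\geq\min\{1,x_1(t)+c\}$; iterating $T\geq 1/c$ steps gives $x_1(nT)=1$. For $k\geq 2$, the crucial consequence of (i) is that $1\notin\mathcal{N}_k(t)$, since $\mathcal{N}_k(t)\subseteq\mathcal{U}(t)-\{k\}$, so the convex-combination term in (\ref{basicHKmodel}) lies in $[0,M(t)]$ with $M(t):=\max_{j\geq 2}x_j(t)$; combining this with the otherwise-branch of (\ref{basicHKmodel}) and (iii) yields $x_k(t+1)\leq\max\{0,M(t)-c\}$ for every $k\geq 2$, hence $M(t+1)\leq\max\{0,M(t)-c\}$, and iterating $T$ times gives $M(nT)=0$.

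\smallskip

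Since $\sum_n\mathbb{P}(E_n)=\infty$ and the $E_n$'s are independent, the second Borel--Cantelli lemma gives $\mathbb{P}\{E_n\text{ i.o.}\}=1$, so $d_{\mathcal{V}}(nT)=1>\epsilon$ for infinitely many $n$ almost surely, which precludes quasi-synchronization a.s. The main obstacle is the deterministic step above: one has to confirm that even when some pairs in $\{2,\ldots,n\}$ do communicate during $B_n$ under (\ref{basicHKmodel}), the bounded-confidence interaction cannot lift any $x_k$ above $M(t)$; this is exactly where the isolation of agent $1$ is essential, as it excludes the agent carrying the extreme value $x_1$ from every neighbor average.
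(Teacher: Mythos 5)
Your proposal is correct, but it reaches the conclusion by a genuinely different route than the paper. The paper also shows $\limsup_{t\rightarrow\infty}d_{\mathcal{V}}(t)=1$ a.s., but it does so via Lemma \ref{Lemma:smallproboccur} (a conditional hitting-time argument), a symmetric noise protocol (\ref{Protol:noiseproto1out}) that pushes the lower half of the population down and the upper half up, and the event $A(t)^C$ that the two current extremal agents $m_t,M_t$ are not \emph{both} communicating, whose probability is bounded below by $\frac{2(1-p_n)}{n}$ in (\ref{Equa:probAtc}). You instead isolate one \emph{fixed} agent for an entire block of length $T=\lceil 1/c\rceil$, using $\mathbb{P}\{1\notin\mathcal{U}(t)\}\geq\frac{1-p_n}{n}>0$, and conclude with the second Borel--Cantelli lemma on independent block events. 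Two things your version buys: (i) it is self-contained, replacing the paper's auxiliary hitting-time lemma with a textbook tool; and (ii) the deterministic step is cleaner and, arguably, more watertight --- because agent $1$ is absent from $\mathcal{U}(t)$ it can never enter any $\mathcal{N}_k(t)$, so $\max_{j\geq 2}x_j(t)$ is a supermartingale-like quantity that drops by $c$ per step no matter how agents $2,\ldots,n$ interact among themselves, whereas the paper's claim $d_{\mathcal{V}}(t+1)\geq d_{\mathcal{V}}(t)+a$ under $A(t)^C$ requires more careful bookkeeping, since the extremal agent that \emph{is} communicating can be dragged inward by averaging. The price you pay is negligible (a smaller per-step probability, $(1-q)^Tp_+^Tp_-^{(n-1)T}$ versus the paper's $\tilde{p}^{t_L}\bar{p}^{nt_L}$, which is irrelevant for a Borel--Cantelli argument). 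One cosmetic remark: your block index collides with $n$, the number of agents, and should be renamed.
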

\begin{proof}
We only need to prove that for any $x(0)\in [0,1]^n, \epsilon\in(0,1)$  and $\delta>0$, $\limsup\limits_{t\rightarrow\infty}d_\mathcal{V}(t)=1$ a.s., i.e.
\begin{equation}\label{Equa:suplimequ}
\begin{split}
  \mathbb{P}\bigg\{\bigcup\limits_{g=0}^{\infty}\{d_\mathcal{V}(t)<1, t\geq g\}\bigg\}=&1-\mathbb{P}\bigg\{\bigcap\limits_{g=0}^{\infty}\bigcup\limits_{t=g}^\infty\{d_\mathcal{V}(t)=1\}\bigg\}\\
  =&1-\mathbb{P}\Big\{\limsup\limits_{t\rightarrow\infty}d_\mathcal{V}(t)=1\Big\}=0.
  \end{split}
\end{equation}
Given any $g\geq 0$, denote $ T=\inf\limits_{t\geq g}\{t:d_{\mathcal{V}}(t)=1\}$, then by Lemma \ref{Lemma:smallproboccur}, we need to prove that, for any initial state $x(0)\in [0,1]^n$, there are $t_L>g, 0<p<1$ such that $\mathbb{P}\{d_\mathcal{V}(t_L)=1\}\geq p$.

Since $\textbf{E}\,\xi_1(1)=0, \textbf{E}\,\xi_1^2(1)>0$ and $|\xi_i(1)|\leq \delta$ a.s., there exist constants $0<a\leq \delta, 0<\bar{p}<1$ such that
\begin{equation}\label{Equa:noiseprob}
  \mathbb{P}\{a<\xi_1(1)\leq \delta\}\geq \bar{p},\,\,\mathbb{P}\{-\delta\leq\xi_1(1)<-a\}\geq \bar{p}.
\end{equation}
For $t\geq g$, consider the following noise protocol
\begin{equation}\label{Protol:noiseproto1out}
   \left\{
    \begin{array}{ll}
      &\xi_i(t+1)\in[-\delta,-a],  \quad\hbox{if}\,\, \min\limits_{j\in\mathcal{V}}x_j(t)\leq x_i(t)\leq \min\limits_{j\in\mathcal{V}}x_j(t)+\frac{d_{\mathcal{V}}(t)}{2}; \\
      &\xi_i(t+1)\in[a,\delta], \quad \hbox{if}\,\,\min\limits_{j\in\mathcal{V}}x_j(t)+\frac{d_{\mathcal{V}}(t)}{2}<x_i(t)\leq \max\limits_{j\in\mathcal{V}}x_j(t).
    \end{array}
  \right.
\end{equation}
Denote $A(t)^C=\Omega-A(t), t\geq 0$, where $A(t)$ is defined in (\ref{Equa:eventAt}), then by $p_n<1$ we have
\begin{equation}\label{Equa:probAtc}
  \begin{split}
  \mathbb{P}\{\{A(t)^C\}=&1-\mathbb{P}\{A(t)\}=1-\sum_{k=2}^n\frac{1}{C_n^k}p_kC_{n-2}^{k-2}\\
=&1-\sum_{k=2}^{n-1}\frac{k(k-1)}{n(n-1)}p_k-p_n\\
  \geq &1-p_n-\frac{n-2}{n}(p_2+\ldots+p_{n-1})\\
\geq & 1-p_n-\frac{n-2}{n}(1-p_n)\\
=&\frac{2(1-p_n)}{n}>0.
  \end{split}
\end{equation}
Hence by the independence of $\{\xi_i(t), \mathcal{U}(t), i\in\mathcal{V}, t\geq 1\}$ and (\ref{Equa:noiseprob}), we can get
\begin{equation}\label{Equa:probdvt1}
\begin{split}
  \mathbb{P}\{d_\mathcal{V}(t+1)\geq d_\mathcal{V}(t)+a\}\geq& \mathbb{P}\{A(t)^C, ~\text{protocol}~ (\ref{Protol:noiseproto1out})~ \hbox{occures at} ~t \}\\
=&\mathbb{P}\{A(t)^C\}\cdot\mathbb{P}\{\text{protocol}~ (\ref{Protol:noiseproto1out})~ \hbox{occures at} ~t\}\\
  \geq&\tilde{p}\bar{p}^n>0,
  \end{split}
\end{equation}
where $\tilde{p}=\frac{2(1-p_n)}{n}$.

Denote $t_L=\lceil\frac{1}{a}\rceil$, then under the protocol (\ref{Protol:noiseproto1out})
\begin{equation*}
  \max_{i\in\mathcal{V}}x_i(g+t_L)=1, \quad\min_{i\in\mathcal{V}}x_i(g+t_L)=0,
\end{equation*}
yielding $d_\mathcal{V}(g+t_L)=1$.
By (\ref{Equa:probdvt1}) and the independence of $\{\xi_i(t), \mathcal{U}(t), i\in\mathcal{V}, t\geq 0\}$, we gain
\begin{equation}\label{Equa:probdVLgeq1}
\begin{split}
 \mathbb{P}\{d_\mathcal{V}(g+t_L)=1\} \geq &\prod_{t=g+1}^{g+t_L}\mathbb{P}\{A(t)^C, ~\text{ protocol (\ref{Protol:noiseproto1out}) occurs at}~t\}\\
 \geq& \tilde{p}^{t_L}\bar{p}^{nt_L}>0.
 \end{split}
\end{equation}
Let $p=\tilde{p}^{t_L}\bar{p}^{nt_L}$, and this completes the proof.
\end{proof}

Theorem \ref{Theorem:noconsas} shows that quite different from the synchronous model, the asynchronous model cannot achieve quasi-synchronization a.s., no matter how small the non-zero noise is.

\subsection{Quasi-synchronization a.s. implies quasi-synchronization i.m.}
Theorems \ref{Theorem:consmean} and \ref{Theorem:noconsas} indicate that, for the system (\ref{basicHKmodel})-(\ref{Model:probnum}), quasi-synchronization i.m. does not necessarily imply quasi-synchronization a.s. The following theorem reveals that the converse conclusion is true, i.e., quasi-synchronization a.s. leads to quasi-synchronization i.m.
\begin{thm}\label{Theorem:asimplymean}
If $\mathbb{P}\Big\{\limsup\limits_{t\rightarrow\infty}d_{\mathcal{V}}(t) \leq \epsilon\Big\}=1$, then $\limsup\limits_{t\rightarrow\infty}\textbf{E}\,d_{\mathcal{V}}(t) \leq \epsilon$.
\end{thm}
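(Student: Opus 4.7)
The plan is to apply the reverse Fatou lemma, exploiting the fact that $d_{\mathcal{V}}(t)$ is uniformly bounded by a deterministic constant. This makes the passage from almost-sure to in-mean control essentially automatic.

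First I would observe that, by construction of the model (see the truncation $(\cdot)_{[0,1]}$ in (\ref{basicHKmodel})), every $x_i(t)$ lies in $[0,1]$ for all $t\geq 0$ and all $\omega\in\Omega$. Consequently
\begin{equation*}
0 \leq d_{\mathcal{V}}(t) \leq 1 \quad \text{a.s. for all } t\geq 0,
\end{equation*}
so the sequence $\{d_{\mathcal{V}}(t)\}_{t\geq 0}$ is dominated by the integrable constant random variable $1$.

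Second, under the hypothesis $\mathbb{P}\{\limsup_{t\to\infty} d_{\mathcal{V}}(t)\leq \epsilon\}=1$, the reverse Fatou lemma applies to the dominated sequence $\{d_{\mathcal{V}}(t)\}$ and yields
\begin{equation*}
\limsup_{t\to\infty} \mathbf{E}\,d_{\mathcal{V}}(t) \;\leq\; \mathbf{E}\Big[\limsup_{t\to\infty} d_{\mathcal{V}}(t)\Big] \;\leq\; \mathbf{E}[\epsilon] \;=\; \epsilon,
\end{equation*}
which is exactly the desired conclusion.

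There is essentially no obstacle here; the statement is a standard consequence of dominated convergence / reverse Fatou, and the only thing to verify is the uniform boundedness of $d_{\mathcal{V}}(t)$, which is immediate from the $[0,1]$-truncation in the model. If one prefers to avoid invoking the reverse Fatou lemma by name, an equivalent route is to write $\mathbf{E}\,d_{\mathcal{V}}(t) = \mathbf{E}[\sup_{s\geq t} d_{\mathcal{V}}(s)] - \mathbf{E}[\sup_{s\geq t} d_{\mathcal{V}}(s) - d_{\mathcal{V}}(t)]$ and apply the monotone convergence theorem to the decreasing sequence $\sup_{s\geq t} d_{\mathcal{V}}(s)\downarrow \limsup_{t\to\infty} d_{\mathcal{V}}(t)\leq \epsilon$ a.s., which gives the same bound.
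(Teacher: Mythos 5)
Your proof is correct. You reach the conclusion by applying the reverse Fatou lemma to the uniformly bounded sequence $d_{\mathcal{V}}(t)\le 1$, which gives $\limsup_{t}\mathbf{E}\,d_{\mathcal{V}}(t)\le \mathbf{E}\big[\limsup_{t}d_{\mathcal{V}}(t)\big]\le\epsilon$ in one line. The paper instead argues by hand: it splits $\mathbf{E}\,d_{\mathcal{V}}(t)$ using the indicators of $\{d_{\mathcal{V}}(t)\le\epsilon\}$ and $\{d_{\mathcal{V}}(t)>\epsilon\}$, bounds the second piece by $\mathbb{P}\{d_{\mathcal{V}}(t)>\epsilon\}$ via the a.s.\ bound $d_{\mathcal{V}}(t)\le 1$, and then drives $\mathbb{P}\big\{\bigcup_{s\ge t}\{d_{\mathcal{V}}(s)>\epsilon\}\big\}$ to zero by continuity of the probability measure along decreasing events. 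Both arguments rest on the same two ingredients (uniform boundedness and the a.s.\ limsup hypothesis), and the paper's version is essentially an inlined proof of reverse Fatou for this bounded special case, so the routes are close in spirit even though the mechanics differ. Your version is shorter and, as a bonus, cleaner on one subtle point: membership in $\bigcap_{s}\bigcup_{t\ge s}\{d_{\mathcal{V}}(t)>\epsilon\}$ (i.e., $d_{\mathcal{V}}(t)>\epsilon$ infinitely often) only forces $\limsup_{t}d_{\mathcal{V}}(t)\ge\epsilon$, not $>\epsilon$, so the paper's continuity-of-measure step strictly needs to be run with $\epsilon+\eta$ and then $\eta\downarrow 0$; the reverse Fatou route avoids this issue entirely. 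Your alternative argument via $\sup_{s\ge t}d_{\mathcal{V}}(s)\downarrow\limsup_{s}d_{\mathcal{V}}(s)$ and monotone (or dominated) convergence is also valid and is just the standard proof of reverse Fatou written out.
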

\emph{Proof:}
Since $\mathbb{P}\Big\{\limsup\limits_{t\rightarrow\infty}d_{\mathcal{V}}(t) > \epsilon\Big\}=0$,
\begin{equation*}
  \mathbb{P}\bigg\{\lim\limits_{s\rightarrow\infty}\bigcup_{t\geq s}\{d_{\mathcal{V}}(t) > \epsilon\}\bigg\}=\lim\limits_{s\rightarrow\infty}\mathbb{P}\bigg\{\bigcup_{t\geq s}\{d_{\mathcal{V}}(t) > \epsilon\}\bigg\}=0,
\end{equation*}
where the first equation follows from the exchange theorem of limit operation and probability measure (refer to Corollary 1.5.2 \cite{Chowprob}).
By $d_\mathcal{V}(t)\leq 1$ a.s., we obtain
\begin{equation}
\begin{split}
\textbf{E}\,d_\mathcal{V}(t)&=\textbf{E}\Big(d_\mathcal{V}(t)I_{\{d_\mathcal{V}(t)\leq \epsilon\}}+d_\mathcal{V}(t)I_{\{d_\mathcal{V}(t)>\epsilon\}}\Big)\\
&\leq\epsilon \mathbb{P}\{d_\mathcal{V}(t)\leq \epsilon\}+\mathbb{P}\{d_\mathcal{V}(t)>\epsilon\}.
\end{split}
\end{equation}
Consequently, we can get
\begin{equation*}
  \begin{split}
     \limsup\limits_{t\rightarrow\infty}\textbf{E}\,d_{\mathcal{V}}(t)\leq&\limsup\limits_{t\rightarrow\infty}\Big(\epsilon \mathbb{P}\{d_\mathcal{V}(t)\leq \epsilon\}+\mathbb{P}\{d_\mathcal{V}(t)>\epsilon\}\Big)\\
     \leq& \epsilon.
   \end{split}
\end{equation*}
This completes the proof.
\hfill $\Box$

\section{Simulations}\label{Section:simulations}

In this section, we introduce our simulation result to help understand the meaning of quasi-synchronization i.m.
Let $n= 40, \epsilon=0.1, \alpha_i(t)=\frac{1}{|\mathcal{N}_i(t)|+1}$, and $|\mathcal{U}(t)|$ is randomly selected from $0,1,\ldots,n$ with equal probability at each time. The initial opinion values are randomly generated on $[0,1]$, then we add independent noises with uniform distribution on $[-\delta, \delta]$ to the system (\ref{basicHKmodel})-(\ref{Model:probnum}). Take $\delta=0.01$, then Fig. \ref{Fig:meanconsen} shows that the system achieves a synchronization at about $t=10000$. However, it is not the almost sure quasi-synchronization (refer to \cite{Su2017auto} for the simulation study of almost sure quasi-synchronization), since it can be calculated that $\max_t d_\mathcal{V}(t)=0.1105>\epsilon=0.1$ from $t=15000$ to $40000$. Such an approximate synchronization can be measured by quasi-synchronization i.m.

\begin{figure}[htp]
  \centering
  \includegraphics[width=3.5in]{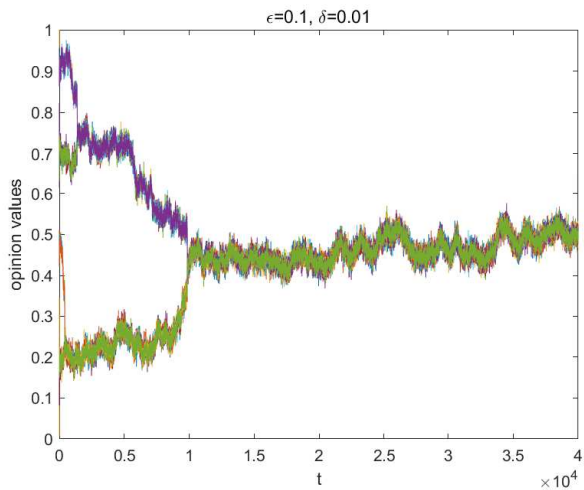}\\
  \caption{Opinion evolution of system (\ref{basicHKmodel})-(\ref{Model:probnum}) of 40 agents. The initial system states are randomly generated on  $[0,1]$, confidence threshold $\epsilon=0.1$, noise strength $\delta=0.01$. }\label{Fig:meanconsen}
\end{figure}

\section{Conclusions and discussions}\label{Section:conclusions}

In this paper, we studied the noise-induced synchronization of an asynchronous opinion dynamics of bounded confidence. A noisy stochastic asynchronous model was proposed, and we proved that, though the proposed model can not achieve quasi-synchronization a.s. as the synchronous HK model does, it can achieve quasi-synchronization i.m. The results for the first time prove theoretically the noise-driven synchronization of the DW model, which has been observed by previous simulation studies.
Moreover, quasi-synchronization i.m. was verified to be weaker than quasi-synchronization a.s. The results in this paper help complete the theory of noise-induced synchronization of bounded confidence dynamics. Moreover, due to the limitations of applying traditional control method, which relies heavily on the accurate information of systems states, to the control of complex systems, the present results lay a theoretical foundation for developing noise-based control strategy of more general complex social opinion systems.

Recently, Li et al designed a particle robotics system which used stochastic movement of loosely coupled robotic systems to realize a global coordination \cite{Li2019nature}. For a further interpretation of this study, we want to mention that the results in this paper also highlight the noise-driven properties of loosely coupled self-organizing particle robotic systems.

\section*{Appendix}

\emph{Proof of Lemma \ref{Lemma:smallproboccur}:}
Notice (\ref{Equa:noiseprob}) and consider the following noise protocol: for all $i\in\mathcal{V}$, $t\geq 0$
\begin{equation}\label{Protol:noiseproto1in}
  \left\{
    \begin{array}{ll}
     & \xi_i(t+1)\in[a,\delta],  \quad\hbox{if}\,\,\min\limits_{j\in\mathcal{V}}x_j(t)\leq\widetilde{x}_i(t)\leq \min\limits_{j\in\mathcal{V}}x_j(t)+\frac{d_{\mathcal{V}}(t)}{2}; \\
      &\xi_i(t+1)\in[-\delta,-a], \quad\hbox{if}\,\, \min\limits_{j\in\mathcal{V}}x_j(t)+\frac{d_{\mathcal{V}}(t)}{2}<\widetilde{x}_i(t)\leq \max\limits_{j\in\mathcal{V}}x_j(t),
    \end{array}
  \right.
\end{equation}
where
\begin{equation}\label{tildex}
  \widetilde{x}_i(t)=\left\{
    \begin{array}{ll}
      &\alpha_i(t)x_i(t)+(1-\alpha_i(t))\frac{\sum_{j\in\mathcal{N}_i(t)}x_j(t)}{|\mathcal{N}_i(t)|},  \quad\hbox{if}\,\, i\in\mathcal{U}(t)~\hbox{and}~\mathcal{N}_i(t)\neq \emptyset; \\
      &x_i(t),  \qquad \hbox{otherwise}.
    \end{array}
  \right.
\end{equation}
Since $\delta\leq \frac{\lambda\epsilon}{2}$, following a similar argument of the proof of Theorem 5 in \cite{Su2017auto}, there exists $L> 0$ such that
\begin{equation}\label{Equa:probevent}
\begin{split}
  \mathbb{P}\bigg\{d_\mathcal{V}(mL)\leq \lambda\epsilon\Big|\bigcap\limits_{j<m}d_\mathcal{V}(jL)> \lambda\epsilon\bigg\}\geq \bar{p}^L
  \end{split}
\end{equation}
for $m\geq 1$. Let $D=\{x\in R^n: \max_{i,j}|x_i-x_j|\leq \lambda\epsilon\}, T_k=kL, p=\bar{p}^L$, then by (\ref{Equa:probevent}) and Lemma \ref{Lemma:smallproboccur}, $\mathbb{P}\{T<\infty\}=1$.
\hfill $\Box$

\vspace{2.5ex}

\end{document}